\definecolor{mygreen}{rgb}{0,0.6,0}
\definecolor{mygray}{rgb}{0.5,0.5,0.5}
\lstdefinestyle{imandra}{
    aboveskip=5pt,
	language=caml,
	keywordstyle=\color{blue},
	commentstyle=\color{green},
	emphstyle=\color{mygreen},
    commentstyle=\itshape\color{purple},
	stringstyle=\color{brown},
	basicstyle=\ttfamily \footnotesize,
	breakatwhitespace=false,         
	breaklines=true,                 
	captionpos=b,       
	keepspaces=true,
	showspaces=false,                
	showstringspaces=false,
	showtabs=false,                  
	tabsize=2,
	inputencoding=utf8,
	extendedchars=true,
	literate={φ}{{$\phi$}}1,
	numbers=none,
	stepnumber=1,
	backgroundcolor=\color{white} 
}
\tikzstyle{every pin edge}=[<-,shorten <=1pt]
\tikzstyle{neuron}=[circle,fill=black!25,minimum size=17pt,inner sep=0pt]
\tikzstyle{input neuron}=[neuron, fill=green!50]
\tikzstyle{output neuron}=[neuron, fill=red!50]
\tikzstyle{hidden neuron}=[neuron, fill=blue!50]
\tikzstyle{annot} = [text width=6em, text centered]
\tikzstyle{nnedge} = [-{stealth},shorten >=0.1cm, shorten <=0.05cm,line width=0.8pt,black]
\tikzstyle{proofNode} = [rounded rectangle, fill=red!30]
\tikzstyle{proofLeaf} = [rounded rectangle, fill=orange!50]
\tikzstyle{proofEdge} = [-{stealth},shorten >=0.1cm, shorten <=0.05cm,line width=0.8pt,black]
\newcommand{\relu}{\text{ReLU}\xspace}
\newcommand{\sat}{\texttt{SAT}\xspace}
\newcommand{\unsat}{\texttt{UNSAT}\xspace}
\newcommand{\rn}[1]{\mathbb{R}^{#1}}
\newcommand{\nn}{\mathcal{N}}
\newcommand*{\SavedLstInline}{}
\LetLtxMacro\SavedLstInline\lstinline
\DeclareRobustCommand*{\lstinline}{%
	\ifmmode
	\let\SavedBGroup\bgroup
	\def\bgroup{%
		\let\bgroup\SavedBGroup
		\hbox\bgroup
	}%
	\fi
	\SavedLstInline
}
\newcommand{\imlcode}[1]{\lstinline[style=imandra,morekeywords={Leaf,Node,Split,Relu,SingleVar},moreemph={list,int}]{#1}}
\newcolumntype{S}{>{\scriptsize}p}
\newcommand{\R}{\mathbb{R}}
\newcommand{\dnnvquery}[1]{\langle #1 \rangle}
\newcommand{\satpred}[1]{\texttt{is\textunderscore solution}(#1)}
\newcommand{\satpredsys}[1]{\texttt{eval\textunderscore system}(#1)}
\newcommand{\mkcert}[1]{\texttt{mk\textunderscore certificate}(#1)}
\newcommand{\node}{\mathcal{T}\xspace}
\newcommand{\vars}{\x}
\newcommand{\x}{\mathbf{x}}
\newcommand{\inp}{\mathbf{I}}
\newcommand{\out}{\mathbf{O}}
\renewcommand{\u}{\mathbf{u}}
\renewcommand{\l}{\mathbf{l}}
\newcommand{\unode}[1]{\u^{#1}}
\newcommand{\lnode}[1]{\l^{#1}}
\newcommand{\sv}{\mathbf{s}}
\renewcommand{\c}{\mathbf{c}}
\newcommand{\checknode}{\texttt{check\textunderscore tree}}
\newcommand{\updatebounds}{\texttt{update\textunderscore bounds}}
\newcommand{\sys}{\mathcal{S}}
\newcommand{\matrixtopoly}[1]{\sys^{#1}}
\newcommand{\splt}{Split}
\theoremstyle{definition}
\newcounter{rowcntr}[table]
\renewcommand{\therowcntr}{1.\arabic{rowcntr}}
\newcolumntype{N}{>{\refstepcounter{rowcntr}\therowcntr}c}
\title{A Certified Proof Checker for Deep Neural Network Verification in Imandra}
\author{Remi Desmartin\footnote{Both authors contributed equally} }{Heriot-Watt University, Edinburgh, United Kingdom}{}{}{}
\author{Omri Isac$^1$}{The Hebrew University of Jerusalem, Israel}{}{}{}
\author{Grant Passmore}{Imandra Inc.}{}{}{}
\author{Ekaterina Komendantskaya}{Southampton and Heriot-Watt Universities, United Kingdom}{}{}{}
\author{Kathrin Stark}{Heriot-Watt University, Edinburgh, United Kingdom}{}{}{}
\author{Guy Katz}{The Hebrew University of Jerusalem, Israel}{}{}{}
\keywords{Neural Network Verification, Farkas Lemma, Proof Certification}
\begin{document}
\authorrunning{R. Desmartin, O. Isac  et al.}
\maketitle
\begin{abstract}
 Recent advances in the verification of deep neural networks (DNNs)
 have opened the way for a broader usage of DNN verification technology in many application areas, including safety-critical ones. However, DNN verifiers are themselves complex programs that have been shown to be susceptible to errors and numerical imprecision; this, in turn, has raised the question of trust in DNN verifiers. One prominent attempt to address this issue is enhancing DNN verifiers with the capability of producing certificates of their results that are subject to independent algorithmic checking.
 While formulations of Marabou certificate checking already exist on top of the state-of-the-art  DNN verifier Marabou, 
 they are implemented in C++, and that code itself raises the question of trust (e.g., in the precision of floating point calculations or guarantees for implementation soundness).
 Here, we present an alternative implementation of the Marabou certificate checking in Imandra -- an industrial functional programming language and an interactive theorem prover (ITP) -- that allows us to obtain full proof of certificate correctness.
 The significance of the result is two-fold.
 Firstly, it gives stronger independent guarantees for Marabou proofs. Secondly, it opens the way for the wider adoption of DNN verifiers in interactive theorem proving in the same way as many ITPs  already incorporate SMT solvers. 
\end{abstract}

\section{Introduction}
\label{sec:intro}

As part of the AI revolution in computer science, deep neural networks (DNNs) are becoming the state-of-the-art solution to many computational problems, including 
in safety-critical applications, e.g., in medicine~\cite{suzukiOverviewDeepLearning2017}, aviation~\cite{ElElIsDuMeGaPoGuBoCoKa24} and autonomous vehicles~\cite{BoDeDwFiFlGoJaMoMuZhZhZhZi16}.
DNNs are complex functions, with parameters \emph{optimised} (or \emph{trained}) to fit a large set of input-output examples.
Therefore, DNNs are intrinsically opaque for humans and are known for their vulnerability to errors in the presence of distribution shifts~\cite{szegedyIntriguingPropertiesNeural2013}. 
This raises the question of the reliability of DNNs in safety-critical domains and calls for tools that guarantee their safety~\cite{PLNNV25}.

 One family of such tools 
 are \emph{DNN verifiers}, which allow to either mathematically prove that a DNN complies with certain properties or provide a counterexample for a violation~\cite{BrMuBaJoLi23}.
They are typically based on SMT-solving with bound tightening (Marabou~\cite{WuIsZeTaDaKoReAmJuBaHuLaWuZhKoKaBa24}), interval bound propagation ($\alpha\beta$-CROWN~\cite{wangBetaCROWNEfficientBound2021}), or abstract interpretation (AI2~\cite{GeMiDrTsCgVe18}). 
But implementing these verifiers is only a first step to full safety guarantees.
Even complete DNN verifiers are prone to implementation bugs
and numerical imprecision that, in turn, might compromise their soundness and can be maliciously exploited, as was shown in e.g.~\cite{PLNNV25,JiRi21,Zombori2021FoolingAC}.

One could consider verifying DNN verifiers directly. However, as Wu et al.~exemplify, a mature DNN verifier is a complex multi-platform library; its direct verification would be close to infeasible~\cite{WuIsZeTaDaKoReAmJuBaHuLaWuZhKoKaBa24}. A similar problem was faced by the SMT solvers community~\cite{Ne98, BaDeFo15,BaReKrLaNiNoOzPrViViZoTiBa22, DeBj11}. A solution was found: to produce (and then check)  a \emph{proof certificate} for each automatically generated proof, instead of checking the verifier in its entirety~\cite{10.1145/3689624}. The software that checks the proof certificate is called the \emph{proof checker}.  
Ideally,  the proof checker should: a) be significantly simpler than the original verifier and b) yield strong guarantees of code correctness (i.e. be formally verified). 
For Marabou, the first half of this research agenda was accomplished by Isac et al.~\cite{IsBaZhKa22} with
proof checking for Marabou being reduced to the application of
the \emph{Farkas lemma}~\cite{Va96}, a well-known solvability theorem
for a finite system of linear equations, along with a tree structure that reflects Marabou's verification procedure.  
However, the  proof checker in~\cite{IsBaZhKa22}
came with no formal guarantees of its own correct implementation. 
\begin{figure}[t]
\vspace{-0.6cm}
\centering
\begin{tikzpicture}[
   verifier/.style={rectangle, , fill=orange!40, thick, minimum width=1cm, minimum height=1cm, align=center,
  rounded corners=2pt
  },
  input/.style={rectangle,, fill=yellow!40, thick, minimum width=1cm, minimum height=1cm, align=center,
  rounded corners=2pt
  },
  trusted/.style={rectangle, fill=green!40, thick, minimum width=1cm, minimum height=1cm, align=center,
  rounded corners=2pt
  }]

  \node[input](DNN){DNN};
    \node[input, below = 0.5cm of DNN](spec){Property \\ Spec.};
    \node[verifier, yshift = -0.75cm, right = 1.1cm of DNN](verifier){Certificate-producing \\ DNN Verifier};

    \node[trusted, right = 3.1cm of verifier, yshift = 0.75cm](SAT){Check  in DNN};

     \node[trusted, below= 0.5cm of SAT,](UNSAT){
     Obtain full proofs \\ from certificates  (in ITP)
   };

   \draw[->] (DNN) -- (verifier);
   \draw[->] (spec) -- (verifier);
   \draw[->] (verifier) -- (SAT)node[midway, above,yshift=5pt] {Counter-Example};
   \draw[->, ForestGreen, line width=1pt] (verifier) -- (UNSAT) node[midway, below,yshift=-5pt, black] {Compliance};
\end{tikzpicture}
\caption{Model of trust: a neural network (DNN) and a property specification verified by a certificate producing DNN Verifier (e.g. Marabou). The SAT/UNSAT certificate is then checked by a trusted program in both cases. This paper addresses verification of the \unsat certificate, using the ITP Imandra.}
  \label{fig:trust}
\end{figure}

In this paper, we address this problem by
 deploying the well-established principles of \emph{proof-carrying code}~\cite{Necula97,komendantskaya2025proofcarryingneurosymboliccode} and \emph{self-certifying code}~\cite{10.1145/3689624}. Namely, we implement formal proof production from Marabou certificates in
the interactive theorem prover Imandra~\cite{passmoreImandraAutomatedReasoning2020}.
Imandra can implement -- and prove properties of -- programs written in a subset of OCaml, and thus avoids a potential discrepancy between implementation and formalisation. In Imandra, we obtain a certificate checker implementation that can be safely executed \emph{and} proven correct in the same language.
Imandra supports arbitrary precision real arithmetic, and thus does not suffer from floating point imprecision that haunts DNN verifiers. Finally, it strikes a good balance between interactive and automated proving: i.e., it features strong automation while admitting user interaction with the prover via tactics. 
We exhibit the role of the Imandra checker within the verification process in Figure~\ref{fig:trust}.

\noindent
\textbf{Contributions.} 
Reconstruction of full correctness proofs from Marabou certificates rests on two major results:
\begin{itemize}
\item  Firstly, the  Imandra implementation of an algorithm that reconstructs a proof of validity of Marabou certificates. The latter are given by \emph{Marabou proof trees} with a DNN verification query at the root, nodes corresponding to partitions of the search space, and Farkas vectors at the leaves. We get for free that our implementation of the algorithm is free from floating point instability, unlike the C++ version in~\cite{IsBaZhKa22}. 

\item Secondly, a formal proof of soundness of the algorithm proven in Imandra, that guarantees that  if this algorithm returns  \unsat then indeed the given set of polynomial constraints at the root of the tree has no solution. This proof relies on a new formal proof of the \emph{DNN variant} of the famous Farkas lemma in Imandra. 
\end{itemize}

 Finally, we demonstrate that our verified certificate checker can be executed in practice. We analyze the code complexity, run it on two verification benchmarks, and evaluate its performance speed against the original certificate checker and verifier~\cite{IsBaZhKa22}. Our results suggest an expected trade-off between the reliability and scalability of the proof checking process, with Imandra code taking about 4.56-4.76 times as long as original verification and checking time, on average.

The paper proceeds as follows. We start with explaining the essence of Marabou certificate production in Section~\ref{sec:background} and introducing the Imandra proof of the Farkas lemma in Section~\ref{sec:check}. Section~\ref{sec:imandra_checker} introduces the new certificate checking algorithm in Imandra, and Section~\ref{sec:checkerSoundness} proves its soundness. Finally, Section~\ref{sec:eval} is devoted to practical evaluation of the verified checker, and Section~\ref{sec:relatedwork} concludes the paper.

\section{Background}
\label{sec:background}

\subsection{Deep Neural Networks as Graphs}
\label{sec:DNN}
A DNN is usually defined as a function $\nn:\rn{k_I}\rightarrow\rn{k_O}$, with inputs $\x_\inp = x_1, \ldots, x_{k_I}$ and outputs $\x_\out = y_1, \ldots, y_{k_O}$.
We represent a DNN by a directed, acyclic, connected and weighted graph $(V,E)$, where $V$ is a finite set of vertices $\{v_1, \ldots , v_k\}$ and $E$ is a set of edges of the form $\{(v_i,v_j)\}$, equipped with a \emph{weight function} $w: E \to \mathbb{R}$, and an \emph{activation function} $a_v:  \R \to \R$ for each node $v$.
A vertex $v_i$ is called an \emph{input vertex} if  there is no $v_j$ s.t. $(v_j, v_i) \in E$ and
 an \emph{output vertex} if  there is no $v_j$ s.t. $ (v_i, v_j) \in E$. The following picture shows an example of a DNN represented as a graph:
 
\vspace{-0.2cm}
   \begin{center}
  \def\layersep{1.5cm}
  \scalebox{0.9}{
      \begin{tikzpicture}[shorten >=1pt,->,draw=black!50, node distance=\layersep,font=\footnotesize]
        
        \node[input neuron] (I-1) at (0,-1) {$x_1$};
        \node[input neuron] (I-2) at (0,-2.5) {$x_2$};
    
        \node[hidden neuron] (H-1) at (\layersep, -1) {$v_1$};
        \node[hidden neuron] (H-2) at (\layersep,-2.5) {$v_2$};
        \node[output neuron] (O-1) at (2*\layersep,-1.75) {$y$};
        
        \draw[nnedge] (I-1) --node[above,pos=0.4] {$2$} (H-1);
        \draw[nnedge] (I-2) --node[above,pos=0.4] {$1$} (H-2);
        \draw[nnedge] (I-1) --node[above,pos=0.25] {$-1$} (H-2);
        \draw[nnedge] (I-2) --node[above,pos=0.25] {$1$} (H-1);
        
        \draw[nnedge] (H-1) --node[above] {$-1$} (O-1);
        \draw[nnedge] (H-2) --node[below] {$2$} (O-1);

        \node[above=0.01cm of H-1] (b1) {$\relu$};
        \node[above=0.01cm of H-2] (b1) {$\relu$};
        \node[above=0.01cm of O-1] (b1) {$id$};
      \end{tikzpicture}
  }
\end{center}

We assume that all input vertices are assigned a value,  using $f(v_i) = r \in \mathbb{R}$.
For all other vertices, we define the \emph{weighted sum} $(z)$ and \emph{neuron} $(f)$ functions as follows:
\begin{equation}
\vspace{-0.2cm}
\label{eq:DNNdef}
z(v_j) =  \underset{(v_i, v_j) \in E}{ \sum }    w(v_i, v_j) \cdot f(v_i)
\\
f(v_j) = a_{v_j} ( z(v_j))
\end{equation}

\noindent Note that in this paper, we will use only two activation functions: 
\textit{rectified linear unit} (\relu{}), defined as  $\relu(x) = \max(x,0) $ and the
 identity function $id$. However, in principle, this work can be extended to support DNNs with any piecewise linear activation function.

To cohere with the existing literature, we will use $x$ and $y$ (with indices) to refer to input and output vertices and $\x_\inp$ and $\x_\out$ to lists of input and output vertices. For brevity, we will call them simply \emph{inputs} and \emph{outputs}.

\subsection{DNN Verification Queries}
\label{sec:DNNverification}

Given a variable vector $\x$ of size $k_I$, and $\l, \u \in \rn{k_I}$, a \emph{bound property} $P(\x)$ is defined by inequalities $l_j\leq x_j \leq u_j$ for each element $x_j$ of $\x$.
Given a DNN $\nn$,
an \emph{input bound property} $P(\x_\inp)$ 
and a \emph{linear arithmetic} output property $Q(\x_\out)$,
such that $ P(\x_\inp) \wedge Q(\nn(\x_\inp))$.
 If such $\x_\inp$ 
 exists, the verification problem $\langle \nn, P, Q \rangle$ is
\textit{satisfiable} (\sat); otherwise, it is
\textit{unsatisfiable} (\unsat). 
Typically, $ P(\x_\inp) \wedge Q(\nn(\x_\inp)) $ represent an erroneous behavior; thus
an input $\x_\inp$ and output $\nn(\x_\inp)$ that satisfy the query serve as a counterexample and 
\unsat{} implies safe behavior. Although our approach supports any linear property for the output, we focus on bound properties for simplicity.
This definition can be extended to more sophisticated properties of  DNNs, that express constraints on the a DNN's internal variables~\cite{GoLuMaPuXiYu23}.

Recall that all DNNs we consider 1.) come with affine (i.e., linear polynomial) weight functions and 2.) piecewise linear activation functions. 
As a consequence, any DNN verification problem can be reduced to piecewise linear constraints that represent the activation functions, accompanied with a Linear Programming (LP)~\cite{Da63} instance, which represents the affine functions and the input/output properties. 
One  prominent approach for solving the DNN  verification constraints is using algorithms for solving LP, coupled with a \emph{case-splitting} approach for handling
the piecewise linear constraints~\cite{BaIoLaVyNoCr16, KaBaDiJuKo21}.

One example of such a DNN verification algorithm is the Reluplex~\cite{KaBaDiJuKo21} algorithm, which extends the Simplex algorithm~\cite{Da63,DuDe06} for solving LP problems. 
Based on a DNN verification problem $\langle \nn, P, Q \rangle$, the algorithm initiates:
\begin{itemize}
\item a variable vector $\x$ containing variables $\x_\inp$ and $\x_\out$, and fresh variables $x_{z(v_i)}$  and $x_{f(v_i)}$ whose values represent weighted sum and neuron functions $z(v_i)$ and $f(v_i)$ for all $v_i \in V$ which are neither an input nor an output. For each node $v_i\in V$ with the \relu{} activation function, the algorithm initiates an additional fresh \emph{auxiliary variable} $x_{aux_i}$, to represent the non-negative difference $f(v_i)-z(v_i)$.

\item two \emph{bound vectors} $\u,\l$, giving upper and lower bounds to each element in $\x$. 
The values for $\l$ and $\u$ are generated as follows: values of $\x_\inp$ and $\x_\out$ are given directly by $P(\x_\inp)$ and $Q(\x_\out)$. The remaining values of $\l$ and $\u$ are computed by propagating forward the lower and upper bounds of $\x_\inp$ through $\nn$ using equations~(\ref{eq:DNNdef}) and~(\ref{eq:tableau_lpe}). For variables $x_{f(v_i)}, x_{z(v_i)}, x_{aux_i}$, we denote their lower and upper bounds $\l_{f(v_i)}, \l_{z(v_i)}, \l_{aux_i}$ and $\u_{f(v_i)}, \u_{z(v_i)}, \u_{aux_i}$, respectively.

\item a matrix $A$, called \emph{tableau}, which represents the equations given in~(\ref{eq:DNNdef}), with additional \emph{auxiliary} equations. It is computed by defining a system of 
equations of the form:
\begin{equation}
\vspace{-0.2cm}
\label{eq:tableau_lpe} \underset{(v_i, v_j) \in E}{ \sum }    w(v_i, v_j) \cdot x_{f(v_i)} - x_{z(v_i)} = 0 
\\
x_{f(v_i)} - x_{z(v_j)}- x_{aux_i} = 0 
\end{equation} 
one for each non-input node $v_i$.  
In  $A$, we record these coefficients as follows. Let
$A \in \R^{m \times n}$, where $n$ is the size of $\x$ and $m$ is the number of 
equations we generated as in~(\ref{eq:tableau_lpe}).  Each entry $(k,j)$ in $A$ contains the coefficient of the $j^{th}$ variable in $\x$ in the $k^{th}$ equation.
We then obtain $A \cdot \x=\mathbf{0}$, where $\cdot$ is the dot product, and $\mathbf{0}$ is the vector of $m$ zeros.

\item a set  $C$ of \emph{ReLU constraints} for variables in $\x$, given by $x_{f(v_i)} = ReLU (x_{z(v_i)})$ for all non-input nodes $v_i$, such that $f(v_i) = ReLU(z(v_i))$ (as per equation~(\ref{eq:tableau_lpe})). We call the variables $x_{f(v_i)}$, $x_{z(v_i)}$ and $x_{aux_i}$ the \emph{participating variables} of a constraint.
We can join all $c \in C$  in a conjunction: $\bigwedge c$, and  denote the resulting formula by $\mathcal{C}(\x)$. Intuitively, $\mathcal{C}(\x)$ holds if the variables in $\x$ satisfy the constraints in $C$.

\end{itemize}

\noindent    The tuple $\dnnvquery{A, \u, \l, C}$ is a \emph{DNN verification query}, and the tuple $\dnnvquery{A, \u, \l}$ is a \emph{linear query}. We use the notation $\x[\x_i/a]$ for substituting the $i^{th}$ element in a vector $\x$ by the value $a$.

\begin{example}[DNN Verification Query]
\label{ex:query}
Consider the DNN in Section~\ref{sec:DNN},
the input bound property $P$ that holds if and only if $(x_1, x_2)\in [0,1]^2$ and the output bound property $Q$ that holds if and only if $y\in[4, 5] $. We first obtain $\x$:

\begin{tabular} {c}
        $\x =
            \begin{bmatrix}
                x_1 & x_2 & x_{z(v_1)}  & x_{z(v_2)} & x_{f(v_1)}& x_{f(v_2)} &  x_{aux_1} &  x_{aux_2} & y 
            \end{bmatrix}^\intercal$ \\
\end{tabular}

\noindent Next, $\l$ and $\u$ are computed as follows. Firstly, $P$ and $Q$ already give the bounds for $x_1$, $x_2$, $y$. For the rest, we propagate the input bounds. We get:
\begin{equation}
\begin{split}
     &0 \leq x_1, x_2 \leq 1,\:\:  0\leq x_{z(v_1)}, x_{f(v_1)} \leq 3, \:\:
     -1 \leq x_{z(v_2)} \leq 1, \:\: 0 \leq x_{f(v_2)} \leq 1, \\ 
      &0\leq x_{aux_1} \leq 3, \:\:
      0 \leq  x_{aux_2} \leq 2, \: \: 
      4 \leq y \leq 5
\end{split}
\end{equation}
It only takes to assemble these values into vectors $\l$ and $\u$: 

\begin{tabular} {c}
         $\u =
            \begin{bmatrix}
               1 & \:  1 & \:\:\: \: 3 & \:\: 1 & \:\: 3 & \:\: 1 &  \: 3 & \: 2 & \: 5
            \end{bmatrix}^\intercal $ \\[0.1cm]
        $\l = \:
            \begin{bmatrix}
                 0 & \:0 & \:\: 0 &\:\: -1 &\: 0 &\: 0 & \: 0 & \: 0 &\: 4
            \end{bmatrix} ^\intercal$ 
           
\end{tabular}

To obtain $A$, we first get all equations as in ~(\ref{eq:tableau_lpe}): 
$ 2x_1 + x_2 - x_{z(v_1)}  = 0$, $x_2 - x_1 - x_{z(v_2)}  = 0$, 
$2x_{f(v_2)} - x_{f(v_1)} -  y   = 0$, 
$x_{f(v_1)} - x_{z(v_1)}- x_{aux_1} = 0$, 
$x_{f(v_2)} - x_{z(v_2)}- x_{aux_2}  = 0$.

\noindent The first equation suggests we have coefficients: $2$ for $x_1$, $1$ for $x_2$ and $-1$ for $z(v_1)$. Because this equation does not feature any other variables in $\x$, we record zero coefficients for all other variables; which gives the first row in the matrix $A$. We continue in the same manner for the remaining equations:
 
\begin{tabular}{ c c } 

 $A =
  \begin{bmatrix}
    2 \:\:  & \: 1 \:  & -1  & 0  & 0  & 0  & \: 0 & 0  & \: 0 \:\\
    -1 \:\:  & \: 1 \:  &  0 & -1 & 0  & 0  & \: 0 \: & 0  & \: 0  \\
    0 \:\:  & \: 0 \:  &  0 &  0 & -1 & 2 & 0  & \: 0 & \: -1 \: \\
     0 \:\:  & \: 0 \:  &  -1 &  0 & 1 & 0 & -1  & \: 0 & \: 0 \: \\
      0 \:\:  & \: 0 \:  &  0 &  -1 & 0 & 1 & 0  & \: -1 & \: 0 \: \\
    \end{bmatrix} $

\end{tabular}

\noindent Finally, the set of \relu{} constraints is given by:
$\{x_{f(v_1)} = \relu (x_{z(v_1)}), x_{f(v_2)} = \relu (x_{z(v_2)}) \}$.
\end{example}

 \begin{definition}[Solution to DNN verification query]
    Given a vector $\sv \in \rn{n}$, variable assignment $\x = \sv$ 
    is a \emph{solution of the DNN verification query} 
    $\dnnvquery{A, \u, \l, C}$ if:
    \begin{equation}(A \cdot \x = \mathbf{0} \wedge \l \leq \x \leq \u \wedge \mathcal{C}(\x)) [\x / \sv] 
    \end{equation}
    We define the predicate 
    $\satpred{ \dnnvquery{A,\u, \l,C}, \sv}$  that returns \emph{true} if $\sv$ is a solution to $\dnnvquery{A, \u, \l, C}$.
    In case $C$ is empty, we will write $\satpred{ \dnnvquery{A,\u, \l}, \sv}$.

\end{definition}

\subsection{DNN Verification in Marabou}
\label{sec:dnnvmarabou}
In order to find a solution for a DNN verification query, or conclude there is none, DNN verifiers such as Marabou~\cite{WuIsZeTaDaKoReAmJuBaHuLaWuZhKoKaBa24} rely on algorithms such as Simplex~\cite{Da63} for solving the linear part of the query (i.e. $\dnnvquery{A,\u,\l})$, and enhance it with a splitting approach for handling the piecewise linear constraints (i.e. \relu{}) in $C$. Conceptually, the DNN verifier invokes the linear solver, which either returns a solution to the linear part of the query or concludes there is none. If no solution exists, the DNN verifier concludes the whole query is \unsat. If a solution is provided, then the verifier checks whether it satisfies the remaining \relu{} constraints. If so, the verifier concludes \sat{}. If not, case splitting is applied, and the process repeats for every case. For constraints of the form $f = \relu(z)$, a case-split divides the query into two subqueries: one enhances the linear part of the query with $f = 0 \wedge  u_z = 0$  and the other with  $aux = 0 \wedge l_z = 0 $. Adding $aux = 0$ is equivalent to adding $x_f = x_v$, thus by adding the auxiliary variable, case splitting is performed by updating the bounds ($\u_{aux} = \l_{aux} = 0$) without adding new equations. 
Besides splitting over a piecewise linear constraint, splitting can be performed over a single variable $x$ whose value can either be less or equal than some constant  $c\in\mathbb{R}$ or greater or equal than the constant; i.e., one subquery is enhanced with $u_x = c$ and the other with $l_x = c$.
This induces a tree structure, with nodes corresponding to the splits. If the verifier answers a subquery, then its node represents a leaf, as no further splits are performed. 
A tree with all leaf subqueries are \unsat{} corresponds to an \unsat{} query, and inversely a tree with a single \sat{} leaf corresponds to a \sat{} query. Note that in every such \unsat{} leaf, unsatisfiability is deduced using the linear part of the query. 

Also, note that such a DNN verification scheme heavily relies on a solver for linear equations, which often manipulates the matrix $A$, as well as many optimizations and heuristics for splitting strategy. A proof checker avoids implementing any of those operations.

\section{Farkas Lemma for DNN Verification}\label{sec:check}

Recall that the idea of a proof checker assumes that we can obtain a proof witness that the checker can certify independently.
Since a solution to a DNN verification query is a satisfiability problem, a satisfying assignment serves as a straightforward (and independently checkable) proof witness of \sat{}.
However, providing a proof witness for the \unsat{} case was an open question, based on the NP-hardness of the problem~\cite{SaLa21}. The novelty of  Isac et al.'s result~\cite{IsBaZhKa22} was in showing
 that the Farkas lemma~\cite{Va96} can be reformulated constructively in terms of the DNN verification problem, and that the vectors in its construction  
can be then used to witness the \unsat~{proof}.
We will call this specialized constructive form of the Farkas lemma
the \emph{DNN Farkas lemma} (cf. \S~\ref{sec:verificationProofs}). 

To use the DNN Farkas lemma as part of the verified checker, we  need to prove its correctness in Imandra. Indeed, there exist \emph{Mathematical Components}~\cite{affeldt2022mathcomp} proofs of the lemma in its original form~\cite{AllamigeonK19, vass_repo}. 
However, those formalisations rely strongly on comprehensive MathComp matrix libraries.  Other ITP attempts~\cite{botteschFarkasLemmaMotzkin2019,bessonModularSMTProofs2011,passmoreACL2ProofsNonlinear2023} have suggested ways to prove the Farkas lemma directly in terms of systems of linear polynomial equations (abbreviated \emph{systems of l.p.e.} for short), 
 obtaining the original Farkas lemma as a corollary (cf. Figure~\ref{fig:Farkas}). 
We will call this form of the Farkas lemma \emph{the polynomial Farkas lemma}.  

\begin{figure}[t]
\vspace{-0.6cm}
\centering

\begin{tikzpicture}[thick,
    set/.style = {circle,
        minimum size = 3cm}]
   
\node[rectangle,draw] (A) at (-3,4) {original Farkas lemma};

\node[rectangle,draw = red, double] (D) at (3,4) {polynomial Farkas lemma (\unsat)};

\node[anchor=south west, shift={(0.3,0.3)}, color=red] (D_lemma) at (3,4) {\footnotesize Theorem~\ref{thm:generalizedFarkas}};

\node[rectangle,draw, dashed] (B) at (-3,2.8) {DNN Farkas lemma};

\node[rectangle,draw = red, double] (E) at (3,2.8) {DNN polynomial Farkas lemma (\unsat)};
\node[anchor=south west, shift={(0.3,0.3)}, color=red] (E_lemma) at (3,2.8) {\footnotesize Theorem~\ref{ithm:dnn_poly_farkas}};

\node[rectangle] (F) at (0,1.2) {\unsat of DNN verification query};
\node[anchor=west, shift={(0.3,0)}, color=red] (F_lemma) at (1.5,2) {\footnotesize Theorem~\ref{ithm:reductiontogfl}};

\draw [<->, dotted] (A) edge (B);
\draw [<->] (D) edge (A);
\draw [->, double, red] (D) -- (E);
\draw [->, double, red] (E) -- (F);
\draw [->, dashed] (B) edge (F);
\end{tikzpicture}

\caption{Soundness and known relations between different versions of the Farkas lemma. Boxes represent lemmas that are proven sound.  An arrow from $A$ to $B$ states that the soundness of $A$ implies the soundness of $B$. Dotted lines show conjectures; dashed lines show results that were proven manually;  solid lines show those formalized in an ITP. Single black lines indicates previously known results, double red lines refer to results presented here. We indicate the lemmas  that correspond to each result.}
\vspace{-0.6cm}

\label{fig:Farkas}
\end{figure}
We follow the polynomial approach and prove the polynomial Farkas lemma in Imandra (\unsat{} case only, cf. Theorem~\ref{thm:generalizedFarkas}), obtaining its specialization to the case that refers to the parameters of the DNN verification problem, referred to as the \emph{DNN Polynomial Farkas lemma} (cf. Theorem~\ref{ithm:dnn_poly_farkas}).
In this shape, the lemma relates the existence of witnesses of a certain form 
to the unsatisfiability of a system of l.p.e. constructed from a DNN verification query.
To ensure that the checker is sound,
it remains to link this unsatisfiability to the unsatisfiability of the DNN verification query (Theorem~\ref{ithm:reductiontogfl}). 
 These results are shown in red double lines in Figure~\ref{fig:Farkas}. 

\subsection{The DNN Farkas Lemma}
\label{sec:verificationProofs}

To produce proofs of \unsat{} for linear queries of the form $\dnnvquery{A,\u, \l}$, Isac et al.~\cite{IsBaZhKa22} prove a constructive variant of the Farkas lemma 
that defines proof witnesses of \unsat{}:

\begin{theorem}
\label{thm:mainthm}[DNN Farkas lemma~\cite{IsBaZhKa22}]
    	Let $ A \in\R^{m \times n}$,  $\x$ a vector of $n$ variables and $ \l, \u \in \mathbb{R}^n $, such that $ A\cdot \x = \mathbf{0} $ and $ \l \leq \x \leq \u$.
	Then exactly one of these two options holds:
	\begin{enumerate}
		\item \sat{}: There exists a solution
		$ \sv \in \mathbb{R}^n $ such that    $(A \cdot \x = \mathbf{0} \wedge \l \leq \x \leq \u ) [\x / \sv]$ is true. 
		\item 
  \unsat{}: There exists a \emph{contradiction vector} $ w \in \mathbb{R}^m $ such
		that for all $\x$, if $ \l \leq \x \leq \u $ we have $ w^\intercal
                \cdot A \cdot \x < 0$. As $ w \cdot \mathbf{0} = 0 $,
                $w$ is a proof of the constraints' unsatisfiability. 
                       \end{enumerate}
 Moreover, in the \unsat{} case, the contradiction vector can be constructed during the execution of the Simplex algorithm.
\end{theorem}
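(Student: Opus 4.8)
The plan is to treat this as a theorem of alternatives and prove its two halves separately: that the \sat{} and \unsat{} options are mutually exclusive (\emph{at most one} holds), and that one of them always holds (\emph{at least one} holds); together these yield the ``exactly one'' claim.

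Mutual exclusivity is the routine direction and needs only substitution. Suppose, for contradiction, that both a solution $\sv$ and a contradiction vector $w$ existed. Since $\l \le \sv \le \u$, instantiating the universally quantified \unsat{} statement at $\x = \sv$ gives $w^\intercal \cdot A \cdot \sv < 0$. But $\sv$ is a solution, so $A \cdot \sv = \mathbf{0}$, whence $w^\intercal \cdot A \cdot \sv = w^\intercal \cdot \mathbf{0} = 0$, a contradiction. This is precisely the observation that $w^\intercal \cdot \mathbf{0} = 0$ makes $w$ a genuine certificate of unsatisfiability.

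For the substantive direction I would assume the \sat{} case fails and construct $w$ geometrically. Consider the image $S = \{\, A \cdot \x : \l \le \x \le \u \,\} \subseteq \R^m$. As $\l, \u$ are finite, the box $[\l, \u]$ is compact and convex, so its linear image $S$ is compact and convex as well, and the \sat{} case is precisely the assertion $\mathbf{0} \in S$. If \sat{} fails then $\mathbf{0} \notin S$, and I can strictly separate the origin from $S$: letting $s^\ast$ be the point of $S$ nearest the origin, first-order optimality of the Euclidean projection gives $\langle s^\ast, s - s^\ast \rangle \ge 0$, hence $\langle s^\ast, s\rangle \ge \lVert s^\ast\rVert^2 > 0$, for every $s \in S$. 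Setting $w = -s^\ast$ then yields $w^\intercal \cdot A \cdot \x = \langle w, s\rangle \le -\lVert s^\ast\rVert^2 < 0$ uniformly over the box, which is exactly the required witness (one could alternatively obtain $w$ by reducing to the classical Farkas lemma applied to the system $A \cdot \x = \mathbf{0}$, $\x \ge \l$, $\x \le \u$).

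The ``Moreover'' clause --- that $w$ can be produced \emph{during} the Simplex/Reluplex run rather than by an abstract compactness argument --- is where the real work lies, and I expect it to be the main obstacle. The algorithm maintains the tableau together with a basic/non-basic partition and derived bounds, and reports \unsat{} only once some basic variable is provably pinned outside its bounds; the tableau row certifying that stuck variable is a fixed linear combination of the rows of $A$, and reading off its multipliers (with the sign pattern that drives each non-basic variable to its violating bound) produces $w$ explicitly. The delicate points are (a) arguing that this extracted combination yields a \emph{strict} inequality over the entire box rather than merely a non-strict one, and (b) reconciling the algorithmic $w$ with the separating-hyperplane $w$ above so that the two branches of the alternative are provably complementary. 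These are exactly the facts that resist a purely routine argument --- and the reason the Imandra development instead routes through the polynomial reformulation of Theorem~\ref{thm:generalizedFarkas}.
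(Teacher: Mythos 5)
Your argument is mathematically sound, but note that the paper does not actually prove this theorem: it is imported from~\cite{IsBaZhKa22}, and in Figure~\ref{fig:Farkas} the DNN Farkas lemma appears as a dashed box, i.e.\ a result proven manually elsewhere rather than here. What the paper does prove --- and formalize in Imandra --- is only the half of the alternative that the checker needs, and only in the polynomial reformulation (Theorem~\ref{thm:generalizedFarkas}): if witnesses $\mathbb{I},\mathbb{C}$ with $\mathbb{I}(\x)+\mathbb{C}(\x)<0$ exist, then the system has no solution. Your mutual-exclusivity paragraph is exactly that argument specialized back to the matrix form (substitute the solution, use $A\cdot\sv=\mathbf{0}$ and the bounds to force a non-negative value, contradicting strict negativity). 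Your completeness half --- strict separation of $\mathbf{0}$ from the compact convex image $S=\{A\cdot\x : \l\le\x\le\u\}$ via the nearest-point projection --- is a correct and standard way to obtain the contradiction vector, and it goes beyond anything established in this paper; the paper deliberately avoids that direction because a certificate checker never needs to show that a witness \emph{exists}, only that a supplied one is valid. Two small caveats: if the box is empty your projection argument has no nearest point to work with (though then the \unsat{} clause holds vacuously for any $w$), and, as you yourself flag, the ``Moreover'' clause about extracting $w$ during the Simplex run is the genuinely constructive content of~\cite{IsBaZhKa22} and is not discharged by a separation argument. In short: your proof is correct, but it establishes strictly more than the paper does for this statement, by a non-constructive route that the paper's formalization intentionally sidesteps in favor of the one checkable direction.
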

\vspace{-0.2cm}

Note that, given a vector $s$ or $w$, a proof checker can immediately conclude the satisfiability of $\dnnvquery{A, \u, \l}$, while constructing $s$ or $w$ often requires the DNN verifier to apply more sophisticated procedures~\cite{IsBaZhKa22}. We will prove a polynomial version of this result in Imandra.

\subsection{Polynomial Farkas Lemma}
\label{sec:generalizedFarkas}

Given \emph{coefficients} $\overline{\alpha} = \alpha_0, \ldots , \alpha_{n} \in \rn{n + 1}$ and a vector of variables $\x = (x_1,...,x_n)$, a \emph{real linear polynomial}  $p^{\overline{\alpha}}(\vars)$  of size $n$ is 
defined as follows:
  $  p^{\overline{\alpha}}(\vars) := \alpha_0 + \sum\limits_{i=1}^n \alpha_i x_i. $
\noindent Given $\vars$, one can form different polynomials $p^{\overline{\alpha}_1}(\x), \ldots , p^{\overline{\alpha}_m}(\x)$. To simplify the notation, we write $p_1(\x), \ldots ,p_m(\x)$ to denote $m$ arbitrary polynomials when the concrete choice of $\overline{\alpha}_i$s is unimportant. We will use $q_1(\x), \ldots ,q_m(\x)$ to denote a potentially different list of polynomials.   
A \emph{linear polynomial equation} is an equation of the form $p(\x) \bowtie 0$, where $\bowtie$ is either $=$ or $\geq$.
The following formula defines a  \emph{system  of linear polynomial  equations}   $\mathcal{S}(\x) := (\bigwedge\limits_{i = 1}^K (p_i(\x) = 0)) \land (\bigwedge\limits_{j = 1}^{M} (q_j(\x) \geq 0))$
 where $K, M$ are arbitrary natural numbers. 
 We say that $\mathcal{S}(\x)$ is \emph{satisfied by a vector} $\sv \in \R^n$
if $\mathcal{S}(\x )[\x / \sv]$ is true. 
In Imandra, we use the predicate $\satpredsys{\mathcal{S}(\x),\sv}$, which returns \emph{true} when the vector $\sv$ is a solution to $\mathcal{S}(\x)$. 
We say the system of l.p.e. $\mathcal{S}(\x)$ \emph{has a solution}, denoted \sat, if there exists an $\sv$ such that $\mathcal{S}(\x )[\x / \sv]$ is true. Otherwise, $\mathcal{S}(\x )$ is \unsat.

\begin{figure}[t!]
\vspace{-0.2cm}
    \centering
    \begin{minipage}{1.07\linewidth}
    \begin{multicols}{2}
\begin{lstlisting}[style=imandra,morekeywords={@@by, @@fc, use, theorem, false, auto},emph={system, certificate, var_vect}]
theorem farkas_unsat (s: system) 
    (x: var_vect) (c: certificate) =
  well_formed s x && check_cert s c  
  ==>
  eval_system s x = false
[@@by [%use cert_is_neg s c x]             
   @> [%use solution_is_not_neg s c x]
   @> auto] [@@fc] \end{lstlisting}
        \columnbreak

        \begin{framed}
        \small{\textbf{Theorem 6.} (Polynomial Farkas Lemma (\unsat case)).
        \label{ithm:generalizedfarkas}
        \textbf{If}
           $\exists \mathbb{I}, \mathbb{C}. \mathbb{I}(\x) + \mathbb{C} (\x) < 0$ \textbf{then} 
                    $\neg(\exists \sv. \satpredsys{S(\x)), \sv}$ }
    \end{framed}    
    \end{multicols}
    
    \end{minipage}

    \caption{Left: Imandra code for the Polynomial Farkas lemma. The keywords \imlcode{use} and \imlcode{auto} are theorem proving tactics of Imandra; the \imlcode{fc} annotation instructs Imandra to automatically make use of the lemma as a \emph{forward chaining} rule.  Note the explicit use of lemmas \imlcode{cert_is_neg} and \imlcode{solution_is_not_neg} in the tactic script. Right: pseudocode formulation of the same theorem. In pseudocode, we assume all elements are well-formed, i.e. the  system of l.p.e. is valid w.r.t. $\x$. In Imandra, we assert this with the predicate \imlcode{well_formed}, and the function \imlcode{check_cert} checks whether the provided coefficients build witnesses $\mathbb{I}, \mathbb{C}$ s.t. $\mathbb{I} +  \mathbb{C}<0$. Then \imlcode{eval_system} checks if the assignment $\x$ is a solution to the  system of l.p.e.}
    \label{fig:farkas-imandra}
\end{figure}

\noindent\textbf{Operations on Polynomials.}
A polynomial $p^{\overline{\alpha}}(\x)$ with coefficients $\overline{\alpha} = \alpha_0, \ldots, \alpha_n$ \emph{scaled by a real constant} $c$ is defined as the polynomial over $\x$ with coefficients $c \alpha_0, \ldots , c \alpha_n$, and noted $c \cdot p(\x)$. The \emph{addition of two polynomials} $p^{\overline{\alpha}}(\x)$ and $p^{\overline{\beta}}(\x)$ with coefficients $\overline{\alpha} = \alpha_0, \ldots,\alpha_n$ and $\overline{\beta} = \beta_0,\ldots, \beta_n$ respectively, denoted as $p^{\overline{\alpha}} + p^{\overline{\beta}}(\x)$, is a polynomial with coefficients $\alpha_0 + \beta_0, \ldots, \alpha_n + \beta_n$.
The \emph{linear combination of polynomials $p_1(\x), \ldots ,p_N(\x)$ 
with coefficient vector $\c = c_1, \ldots, c_N \in \mathbb{R}^N$} is defined as the sum of polynomials $p_1(\x), \ldots ,p_N(\x)$ scaled by the coefficients in $\c$: $\sum\limits_{i=1}^N c_i p_i(\x)$.

We define the conversion function from the DNN verification query form to linear polynomial equations and inequalities.

\begin{definition}{Tableau-to-polynomial conversion.}\label{def:matrix_to_poly}
Given a DNN verification query $\dnnvquery{A, \u, \l}$ and variable vector $\x$, the conversion function $\matrixtopoly{A, \u, \l}$ is defined as follows:
\begin{equation}
\matrixtopoly{A, \u, \l}(\x) := \bigwedge_{i=0}^m p^{\alpha(i)}(\x) = 0 \land \bigwedge_{j=0}^n q^{\beta(\u, j)}(\x) \geq 0  
    \land \bigwedge_{k=0}^n q^{\gamma(\l, k)}(\x) \geq 0, \textrm{where}
\end{equation}

\begin{itemize}
    \item $p^{\alpha(i)}$ is a polynomial whose coefficients are the values in the $i^{th}$ row of the matrix $A$ (and constant element is $0$): $p^{\alpha(i)}(\x) := p^{0,a_{i,1},\ldots, a_{i,n}}(\x)$. It encodes the linear polynomial equations as defined in equation (\ref{eq:tableau_lpe}).
    \item For all $j \in [0, n]$, $q^{\beta(\u, j)}(\x) := \u_j - x_j$
    \item For all $k \in [0, n]$, $q^{\gamma(\l, k)}(\x) := x_k - \l_k$
\end{itemize}

Note that the coefficients for polynomials $q^{\beta(\u, j)}$ and $q^{\gamma(\l, k)}(\x)$ are sparse, with $0$ values at all indices except for the constant and at the index of $x_i$.
 \end{definition}

\begin{theorem}[Polynomial Farkas Lemma]
\label{thm:generalizedFarkas}
 
A  system of l.p.e. $\mathcal{S}(\vars) := \bigwedge\limits_{i = 1}^K(p_i(\vars) = 0) \land \bigwedge\limits_{j = 1}^M (q_j(\vars) \geq 0)$
has a solution iff there does not exist:
\textbf{1.} a linear combination $\mathbb{I}(\x)$ of $p_1(\x), \ldots, p_K(\x)$ and 
    \textbf{2.}  a linear combination $\mathbb{C}(\x)$ of $q_1(\x), \ldots, q_M(\x)$ with non-negative coefficients, 
 such that $\mathbb{I}(\x) + \mathbb{C} (\x) < 0$. We call $\mathbb{I}, \mathbb{C}$ \emph{witnesses} of \unsat{} for  $\mathcal{S}(\x)$.
\end{theorem}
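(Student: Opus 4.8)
The plan is to prove the stated equivalence in its contrapositive form, namely that witnesses $\mathbb{I}, \mathbb{C}$ exist \emph{if and only if} $\mathcal{S}(\x)$ is \unsat{}, and to treat the two implications entirely separately, since they are of very different character. One direction is a one-line evaluation argument; the other is the genuine combinatorial content of the classical Farkas lemma.

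First I would dispatch the direction ``$\exists\,\mathbb{I},\mathbb{C} \Rightarrow \mathcal{S}(\x)$ \unsat{}'', which is the soundness direction actually consumed by the checker and mechanized as Theorem~\ref{ithm:generalizedfarkas} (the \texttt{farkas\_unsat} lemma of Figure~\ref{fig:farkas-imandra}). Suppose for contradiction that some $\sv$ satisfies $\mathcal{S}(\x)$, so $p_i(\sv) = 0$ for every $i \leq K$ and $q_j(\sv) \geq 0$ for every $j \leq M$. Writing $\mathbb{I} = \sum_{i=1}^{K} c_i p_i$ and $\mathbb{C} = \sum_{j=1}^{M} d_j q_j$ with each $d_j \geq 0$, linearity of evaluation gives $\mathbb{I}(\sv) = \sum_i c_i p_i(\sv) = 0$ and $\mathbb{C}(\sv) = \sum_j d_j q_j(\sv) \geq 0$, whence $(\mathbb{I}+\mathbb{C})(\sv) \geq 0$. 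This contradicts the witness property $\mathbb{I}(\x)+\mathbb{C}(\x) < 0$, which holds in particular at $\x = \sv$ (indeed, being an affine function bounded above by $0$, the combination is in fact a negative constant). In Imandra this reduces to rewriting with the linearity lemmas for scaling and addition of polynomials together with the sign hypotheses $d_j \geq 0$ --- precisely the decomposition witnessed by the auxiliary lemmas \texttt{solution\_is\_not\_neg} (giving $(\mathbb{I}+\mathbb{C})(\sv)\geq 0$) and \texttt{cert\_is\_neg} (giving $<0$). It needs no geometric reasoning, so I would establish it first and in full generality.

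The harder direction is completeness, ``$\mathcal{S}(\x)$ \unsat{} $\Rightarrow \exists\,\mathbb{I},\mathbb{C}$''. My plan here is a Fourier--Motzkin elimination carried out with explicit coefficient bookkeeping, since it is constructive and amenable to induction. I would first replace each equality $p_i = 0$ by the pair $p_i \geq 0$ and $-p_i \geq 0$, reducing to a pure system of inequalities. I would then eliminate the variables $x_n, \ldots, x_1$ one at a time: each newly produced inequality is a non-negative combination of two inequalities of the previous stage (one with positive, one with negative coefficient on the eliminated variable), and I would carry along the vector of non-negative multipliers expressing it over the original inequalities. The invariant maintained by induction on the number of eliminated variables is that every inequality present at a given stage is a non-negative combination of the starting ones. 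Infeasibility means the residual system, after all variables are gone, contains a false constant inequality $0 \geq c$ with $c > 0$; unwinding its multiplier vector yields non-negative coefficients on the original inequalities, and re-merging each $\pm p_i$ pair turns the coefficients on the equalities into arbitrary reals while leaving those on the $q_j$ non-negative. These are exactly $\mathbb{I}$ and $\mathbb{C}$.

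The main obstacle I anticipate is this completeness direction, and specifically a faithful, terminating formalization of the elimination recursion together with its coefficient-tracking invariant: the case split on the sign of the eliminated variable's coefficient, the simultaneous maintenance of the multiplier data, and the termination measure (the decreasing variable count) must all be discharged inside the prover. An alternative would be to reduce to a matrix statement of Farkas and import an existing proof, but that reintroduces the heavy linear-algebra libraries the paper deliberately avoids (cf.\ Figure~\ref{fig:Farkas}). Since the certificate checker only ever consumes a witness and never has to produce one, I would, as the development does, prioritize the soundness direction for the Imandra proof and treat completeness as the mathematically essential but checker-inessential half.
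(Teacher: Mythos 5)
Your soundness direction (``witnesses exist $\Rightarrow$ \unsat'') is exactly the paper's proof: evaluate at a putative solution $\sv$, the equalities make $\mathbb{I}(\sv)=0$, the non-negatively weighted inequalities make $\mathbb{C}(\sv)\geq 0$, and the resulting $(\mathbb{I}+\mathbb{C})(\sv)\geq 0$ contradicts the witness being negative --- precisely the \texttt{solution\_is\_not\_neg} / \texttt{cert\_is\_neg} decomposition used in the Imandra mechanization. The paper's written proof stops there (it only shows witnesses and a solution cannot coexist, the sole direction the checker consumes and the only one formalized), so your Fourier--Motzkin sketch of the completeness direction goes beyond what the paper actually proves, and your judgement that this half is checker-inessential and costly to formalize matches the paper's own stance.
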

\begin{proof}
We prove that witnesses $\mathbb{I}(\x),\: \mathbb{C}(\x)$ and a solution cannot exist simultaneously.
\begin{enumerate}
    \item Assume there exists
$c = (c_1, \ldots, c_{K})\in\rn{K}$ and $d = (d_1, \ldots, d_{M})\in \rn{M}$, where $d_{j} \geq 0$ for all $d_j \in d$, such that $\mathbb{I}(\x)+\mathbb{C}(\x)= \sum_{i = 1}^K c_{i} \cdot p_i(\x) + \sum_{j = 1}^M d_{j} \cdot q_j(\x) < 0$. 
\item Furthermore, assume that the  system of l.p.e. $\mathcal{S}(\x)$ has a solution $\sv$. Then,  for $\mathcal{S}(\x)[\x / \sv]$ and for all $i \in [1, K], p_i(\sv) = 0$, so we have $\sum\limits_{i = 1}^K c_{i} \cdot p_i(\sv) = 0$.
Similarly,  it implies that for all $j \in [1, M]$,  $q_j(\sv) \geq 0$, so we have $\sum\limits_{j = 1}^M d_{j} \cdot q_j (\sv) \geq 0$.
This means that $\sum\limits_{i = 1}^K c_{i} \cdot p_i(\sv) + \sum\limits_{j = 1}^M d_{j} \cdot q_j (\sv) \geq 0$.
\end{enumerate}

Since we postulated that this sum is negative, this leads to a contradiction.

\end{proof}

\begin{figure}
    \begin{minipage}{1.035\linewidth}
    \begin{multicols}{2}
\begin{lstlisting}[language=caml,xrightmargin=0pt,xleftmargin=1pt,basicstyle=\scriptsize\ttfamily,morekeywords={Goal,SubgoalEnter,false,true,H1,H0},emph={list}]
Goal:
((not
 ((eval_system es x) && (well_formed es x) 
    && es <> [])
  ==>
  (eval_poly (mk_cert_poly cs es) x) >=. 0.0)
 || (not
 ((well_formed es x) && (check_cert es cs))
  ==>
  (eval_poly (mk_cert_poly cs es) x) <. 0.0)
 || ((well_formed es x) && (check_cert es cs)  && es <> [])
   ==>
  (eval_system es x) = false)
.
Enter waterfall with goal:
fun (cs : real list) (es : expr list) 
    (x : real list) 
-> ((not
 ((eval_system es x) && (well_formed es x) 
     &&  es <> [])
  ==>
  (eval_poly (mk_cert_poly cs es) x) >=. 0.0)
 || (not
 ((well_formed es x) && (check_cert es cs))
  ==>
  (eval_poly (mk_cert_poly cs es) x) <. 0.0)
   || ((well_formed es x) && (check_cert es cs) && es <> [])
   ==>
   (eval_system es x) = false)
        1 nontautological subgoal.
Subgoal 1:
H0.((eval_system es x) && (well_formed es x) && es <> [])
     ==>
     (eval_poly (mk_cert_poly cs es) x) >=. 0.0
H1.((well_formed es x) && (check_cert es cs))
     ==>
     (eval_poly (mk_cert_poly cs es) x) <. 0.0
|-------------------------------------------
 ((well_formed es x) && (check_cert es cs) && es <> [])
 ==>
 (eval_system es x) = false
  
	But simplification reduces this to true, using the forward-chaining rules eval_const_neg and scale_empty_invariant and times_neg and times_neg_2 and times_pd and times_psd
\end{lstlisting}

\end{multicols}
\end{minipage}

    \caption{Proof generated by Imandra for Theorem~\ref{thm:generalizedFarkas} (Figure~\ref{fig:farkas-imandra}). It is resolved by simplification and using auxiliary lemmas. Although the application of lemmas \imlcode{cert_is_neg} and \imlcode{solution_is_not_neg} (see Table~\ref{tab:lemmas}) does not appear  explicitly, the auxiliary lemmas shown in the proof trace stem from it.} 
    \label{fig:enter-label}
    \vspace{-0.4cm}
\end{figure}

The Imandra code covering one direction of Theorem~\ref{thm:generalizedFarkas} is given in Figure~\ref{fig:farkas-imandra}.
It is the only direction needed to prove soundness of our implementation, and corresponds to the top double box in Figure~\ref{fig:Farkas}.
In Imandra, we define a valid certificate as a linear combination of the tableau rows which gives a polynomial with all coefficients equal to $0$ and a negative constant. 
The proof then proceeds by applying two lemmas: \imlcode{cert_is_neg}
 (``a valid certificate always evaluates 
to a negative value'') and \imlcode{solution_is_not_neg}  (``evaluating a certificate with a solution to the system evaluates to a non-negative value''). Table~\ref{tab:lemmas} shows formal statements of these lemmas. The tactic \imlcode{auto} resolves the contradiction in the theorem's conclusion.
We require that the system and variable vectors have matching dimensions, and use  the predicate \imlcode{well_formed} to assert that. 
Figure~\ref{fig:enter-label} gives the interested reader a glimpse of the completed proof produced by Imandra in response to user prompt given in Figure~\ref{fig:farkas-imandra}.

Generally, the user interacts with Imandra by supplying a set of automation tactics and possibly providing missing auxiliary lemmas.
The keyword \imlcode{waterfall} refers to a famous Boyer-Moore inductive proof automation approach~\cite{boyerComputationalLogic1979} that is deployed in ACL2 and Imandra~\cite{passmoreImandraAutomatedReasoning2020}. It dynamically composes tactics that generate induction schemes, perform simplification, rewriting, generalization and forward-chaining reasoning during proof search. Together with proofs of auxiliary lemmas the proof of Theorem~\ref{thm:generalizedFarkas} is nearly 200 lines long.

\begin{table}[t]
\centering
\vspace{-0.4cm}
\scalebox{0.9}
{
\begin{tabular}
{
> {\raggedright\arraybackslash}p{0.5cm}
>{\raggedright\arraybackslash}p{3.5cm}>{\raggedright\arraybackslash}p{10.5cm}}
\toprule
\textbf{Sec.} & \textbf{Lemma Name} & \textbf{Mathematical Statement} \\
\midrule
\label{lem:cert_is_neg}
\S~\ref{sec:check} & \imlcode{cert_is_neg} & If $\mathbb{I}(\x) + \mathbb{C}(\x)$ is a polynomial with $0$ coefficients and a negative constant, then $\forall \sv. \mathbb{I}(\x) + \mathbb{C}(\x)[\x/\sv] < 0$ \\
& \imlcode{solution_is_not_neg}\label{lma:sol_not_neg} & $\forall \sv$ s.t. $\sv$ is a solution to $S$. $\mathbb{I}(\x) + \mathbb{C}(\x)[\x/\sv] \geq 0$ \\ 
 & \imlcode{tableau_reduction}\label{lem:tableau_reduction} & If $\exists \sv$ s.t. $A.\sv = \Vec{0}$, then $\sv$ satisfies the equalities in $\matrixtopoly{A, \u, \l}(\x)$ \\ 
\label{lem:bound_reduction}
& \imlcode{bound_reduction} & If $\exists\sv$ bounded by $\l$ and $\u$, then $\sv$ satisfies the inequalities in $\matrixtopoly{A, \u, \l}(\x)$\\ 
\midrule
\S~\ref{sec:checktree} & \imlcode{relu_split}\label{lem:relu_split} &     $\forall x_f, x_b, x_{aux} \in \R$ such that $x_f=\relu(x_b)$ and $x_{aux} = x_f - x_b$, either $(x_b \geq 0 \wedge x_{aux} = 0)$ or $(x_b < 0 \wedge x_f = 0)$ hold\\ 
& \imlcode{get_set_nth}\label{lem:get_set_nth} & $\forall \l \in \R^n, k \in \R, i \in [0,n-1]. \l[\l_i/k]_i = k$ \\
\label{lem:set_nth_unchanged}
& \imlcode{ith_set_nth} & $\forall \l \in \R^n, k \in \R, i,j \in [0,n-1]$ s.t. $i \neq j. \l[\l_i/k]_j = \l_j$ \\
\midrule
\S~\ref{sec:checkerSoundness}\label{lem:well_formed_preservation}  & \imlcode{well_formed_preservation} & If $\dnnvquery{A,\u,\l}$ well-formed w.r.t. variable vector $\x$, then $\dnnvquery{A, \l^L, \u^L}$ and $\dnnvquery{A, \l^R, \u^R}$ are also well-formed w.r.t. $\x$. \\ 
\bottomrule
\end{tabular}
}
\caption{Summary of main lemmas used in proofs of Farkas lemma and Soundness. The notation in this table follows the one given in Theorem~\ref{thm:mainthm} and Definition~\ref{def:matrix_to_poly}: $A$ is a generic tableau in $\R^{n \times m}$; $\x, \l, \u$ are vectors in $\R^n$; $\matrixtopoly{A,\u,\l}$ is the corresponding system of l.p.e; $\l^L, \u^L, \l^R, \u^R$ are updated bounds computed by \imlcode{update_bounds}.}
\label{tab:lemmas}
 \vspace{-0.2cm}
\end{table}

\subsection{Farkas Lemma for DNN Proof Checking}
\label{sec:cores}

Marabou produces proofs of \unsat{} in the form given in Theorem~\ref{thm:mainthm}, but the Imandra implementation uses the formulation of the Farkas lemma given in Theorem~\ref{thm:generalizedFarkas}.
We need to prove that Theorem~\ref{thm:generalizedFarkas} specializes to systems of l.p.e that encode DNN verification queries:

\begin{theorem}[DNN Polynomial Farkas Lemma]
\label{ithm:dnn_poly_farkas}
  $\forall A, \l, \u, C$ \textbf{If} $\exists \mathbb{I}, \mathbb{C}. \mathbb{I}(\x) + \mathbb{C} (\x) < 0$ \textbf{then} 
   $\neg(\exists \sv. \satpredsys{ S^{A, \l, \u}(\x),\sv )}$.
\end{theorem}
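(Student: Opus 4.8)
The plan is to obtain Theorem~\ref{ithm:dnn_poly_farkas} as a direct specialization of the Polynomial Farkas Lemma (Theorem~\ref{thm:generalizedFarkas}). The crucial observation is that for any DNN verification query, the converted system $S^{A,\l,\u}(\x) = \matrixtopoly{A,\u,\l}(\x)$ is itself a system of l.p.e.\ of exactly the shape assumed by Theorem~\ref{thm:generalizedFarkas}: by Definition~\ref{def:matrix_to_poly} it is $\bigwedge_{i} (p^{\alpha(i)}(\x) = 0) \land \bigwedge_{j} (q^{\beta(\u,j)}(\x) \geq 0) \land \bigwedge_{k} (q^{\gamma(\l,k)}(\x) \geq 0)$, which is precisely a conjunction of equalities followed by a conjunction of inequalities. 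Hence the abstract lemma applies verbatim once the instantiation is set up. Note that the ReLU constraint set $C$ does not appear in $S^{A,\l,\u}$, which encodes only the linear part $\dnnvquery{A,\u,\l}$; the universal quantifier over $C$ in the statement is therefore vacuous and requires no separate treatment.

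First I would verify that the conversion of Definition~\ref{def:matrix_to_poly} produces a well-formed system in the sense required by Theorem~\ref{thm:generalizedFarkas}. The equality polynomials $p^{\alpha(i)}$ are read directly off the rows of $A$, and the inequality polynomials $q^{\beta(\u,j)}(\x) = \u_j - x_j$ and $q^{\gamma(\l,k)}(\x) = x_k - \l_k$ each have the correct arity with respect to $\x$. In Imandra this amounts to discharging the \imlcode{well_formed} predicate for $S^{A,\l,\u}$, which follows from the structural (dimensional) properties of the conversion function. I would also note that the system is always non-empty: since the index ranges in Definition~\ref{def:matrix_to_poly} run over $[0,n]$, there are $n+1$ upper-bound and $n+1$ lower-bound inequalities, so the side-condition corresponding to \imlcode{es <> []} in the Imandra formulation is automatically met.

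Second I would match the DNN witnesses to the abstract ones. The witness $\mathbb{I}(\x)$ is by construction a linear combination of the equality polynomials $p^{\alpha(i)}(\x)$, and $\mathbb{C}(\x)$ is a non-negative linear combination of the bound polynomials $q^{\beta(\u,j)}$ and $q^{\gamma(\l,k)}$. Because the inequality family $\{q_j\}$ of the abstract lemma is instantiated here by the two disjoint families of upper- and lower-bound polynomials, the non-negativity condition on the coefficients of $\mathbb{C}$ transfers unchanged. With well-formedness established and the witnesses aligned, the conclusion is immediate: given $\mathbb{I}, \mathbb{C}$ with $\mathbb{I}(\x) + \mathbb{C}(\x) < 0$, Theorem~\ref{thm:generalizedFarkas} yields $\neg(\exists \sv.\ \satpredsys{S^{A,\l,\u}(\x), \sv})$, as required.

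I expect the main obstacle to be formal rather than mathematical. The mathematical content is a pure instantiation, but in Imandra the difficulty lies in getting the concrete representation of $\matrixtopoly{A,\u,\l}$ recognised as an instance of the abstract system type appearing in \imlcode{farkas_unsat}, so that its forward-chaining rules fire. In particular, the bound polynomials $q^{\beta(\u,j)}$ and $q^{\gamma(\l,k)}$ are \emph{sparse} (zero everywhere except the constant term and a single variable), and care is needed to ensure that the polynomial scaling and addition functions manipulate them without index mismatch, and that the dimensional side-conditions feeding the \imlcode{well_formed} predicate are all satisfied by the output of the conversion function.
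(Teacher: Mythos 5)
Your proposal is correct and takes the same route as the paper: Theorem~\ref{ithm:dnn_poly_farkas} is obtained by instantiating the Polynomial Farkas Lemma (Theorem~\ref{thm:generalizedFarkas}) with the system $\matrixtopoly{A,\u,\l}(\x)$ produced by Definition~\ref{def:matrix_to_poly}, with the constraint set $C$ playing no role at this stage. Your identification of the residual work as purely formal (well-formedness, non-emptiness, and witness alignment in Imandra) matches the paper, which defers the substantive bridging to the DNN verification query itself to Theorem~\ref{ithm:reductiontogfl}.
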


\noindent To deduce the \unsat{} of a DNN verification query from the \unsat{} of its equivalent system of l.p.e., we prove the following Lemma (represented by the lower red double arrow in Figure~\ref{fig:Farkas}). 

\begin{theorem}[Sound Application of DNN Polynomial Farkas Lemma]
    \label{ithm:reductiontogfl}
 \textbf{If} \\
  $  \neg(\exists \sv. \satpredsys{\matrixtopoly{A, \u, \l}(\x)}, \sv)$
  \textbf{then}
    $\neg(\exists \sv. \satpred{\dnnvquery{A, \u, \l}, \sv} $.
\end{theorem}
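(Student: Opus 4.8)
The plan is to prove the contrapositive. Rather than reasoning directly about non-existence of solutions, I would establish the set-inclusion of solutions, namely that \emph{every} solution of the linear query is already a solution of the associated system of l.p.e.:
$\satpred{\dnnvquery{A,\u,\l},\sv} \;\Rightarrow\; \satpredsys{\matrixtopoly{A,\u,\l}(\x),\sv}$.
Contraposing this implication immediately yields the theorem, since if no $\sv$ satisfies $\matrixtopoly{A,\u,\l}(\x)$ then no $\sv$ can satisfy $\dnnvquery{A,\u,\l}$.

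The core observation is that the two satisfiability predicates encode exactly the same constraints, merely presented in two different syntactic forms. Unfolding $\satpred{\dnnvquery{A,\u,\l},\sv}$ (with $C$ empty) gives the three conjuncts $A\cdot\sv = \mathbf{0}$, $\l \leq \sv$, and $\sv \leq \u$. Unfolding $\satpredsys{\matrixtopoly{A,\u,\l}(\x),\sv}$ together with Definition~\ref{def:matrix_to_poly} gives the conjunction of the equalities $p^{\alpha(i)}(\sv)=0$ for each row $i$ of $A$, together with the inequalities $q^{\beta(\u,j)}(\sv)=\u_j-\sv_j \geq 0$ and $q^{\gamma(\l,k)}(\sv)=\sv_k-\l_k \geq 0$. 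By construction $p^{\alpha(i)}(\sv)$ is precisely the $i$-th entry of the dot product $A\cdot\sv$, so the block of polynomial equalities is equivalent to $A\cdot\sv=\mathbf{0}$; likewise each sparse bound polynomial collapses to a single coordinate comparison, so the two inequality blocks together are equivalent to $\l \leq \sv \leq \u$.

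Concretely, I would discharge the two halves using the reduction lemmas already isolated in Table~\ref{tab:lemmas}. The lemma \imlcode{tableau_reduction} supplies the first half: from $A\cdot\sv=\mathbf{0}$ it concludes that $\sv$ satisfies every equality $p^{\alpha(i)}(\x)=0$ of $\matrixtopoly{A,\u,\l}(\x)$. The lemma \imlcode{bound_reduction} supplies the second half: from $\l \leq \sv \leq \u$ it concludes that $\sv$ satisfies every inequality $q^{\beta(\u,j)}(\x)\geq 0$ and $q^{\gamma(\l,k)}(\x)\geq 0$. Conjoining the two conclusions yields $\satpredsys{\matrixtopoly{A,\u,\l}(\x),\sv}$, completing the contrapositive.

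The substantive work is therefore hidden inside those two reduction lemmas, and that is where I expect the main obstacle to lie. The difficulty here is not mathematical depth but the faithful bookkeeping of the encoding in the ITP: one must verify that the $i$-th polynomial built from the coefficients $\alpha(i)$ evaluates to the $i$-th coordinate of $A\cdot\sv$ (an induction over the matrix rows and the length of each coefficient list), and that the sparse polynomials $q^{\beta(\u,j)}$ and $q^{\gamma(\l,k)}$, which vanish everywhere except at the constant term and at index $j$ (resp.\ $k$), evaluate exactly to $\u_j-\sv_j$ and $\sv_k-\l_k$. This in turn relies on the dimension-matching hypotheses captured by \imlcode{well_formed} (so that the lengths of $\sv$, the rows of $A$, and the bound vectors $\l,\u$ agree) and on the list-indexing facts \imlcode{get_set_nth} and \imlcode{ith_set_nth}. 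Once the evaluations are shown to line up coordinate-by-coordinate, the remaining step is a routine conjunction elimination and re-introduction that Imandra's \imlcode{auto} and \imlcode{waterfall} tactics should close with little further guidance.
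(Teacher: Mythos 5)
Your proposal is correct and follows essentially the same route as the paper: the paper also proves the contrapositive by splitting the correspondence into the two reduction lemmas \imlcode{tableau_reduction} (equalities, by induction on the tableau length) and \imlcode{bound_reduction} (inequalities, requiring the linking of index-based polynomial construction with recursion-based boundedness checking). You have correctly identified both the decomposition and the locus of the real formalisation effort.
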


It is straightforward to see that, by Definition~\ref{def:matrix_to_poly}, polynomials in $\matrixtopoly{A,\u,\l}$ have a one-to-one correspondence to constraints in $\dnnvquery{A,\u,\l}$. However proving this in Imandra was not trivial. We consider the equations from the tableau rows and the inequalities from the bounds separately. For the equations,  \imlcode{tableau_reduction} is proven by Imandra with minimal guidance by induction on the tableau length. On the other hand, \imlcode{bound_reduction} necessitated linking index-based computations (the bound polynomial for the $i^{th}$ bound is constructed as a polynomial with a single non-zero coefficient at index $i$) and recursion-based computation (e.g. for checking whether a vector is bounded by $\l$ and $\u$).

Assembling Theorems~\ref{thm:generalizedFarkas} and~\ref{ithm:reductiontogfl}, we get that a witness of \unsat{} for a system of l.p.e. constructed from a DNN verification query is sufficient to guarantee that the query is \unsat{}.

\section{A DNN Certificate Checker in Imandra}
\label{sec:imandra_checker}\label{sec:checktree}

Recall that Marabou search induces tree-like structures. 
The proof production in~\cite{IsBaZhKa22} constructs a \emph{proof tree} based on the search trace of \unsat{} queries, serving as a witness. 
We now introduce an alternative algorithm that checks 
these proof trees and certifies that the corresponding DNN verification query is \unsat. If an error is detected, it can identify the specific parts of the proof that contributed to the failure. These correspond to search states where Marabou failed to produce a correct proof.

\begin{example}[Marabou Proof Tree Construction]\label{ex:pt}
Below is a graphical representation of a proof tree proving that the query presented in Example~\ref{ex:query} is \unsat{} and witnessing that during execution, Marabou performed a single split over the \relu{} constraint derived from $v_2$.  
The proof contains the tableau $A$, the bound vectors $\u,\l$, and the \relu{} constraints $C$:

	\centering
	\scalebox{0.83} {
		\def\xSep{2cm}
		\def\ySep{1.2cm}
		\begin{tikzpicture}[ >=stealth,shorten >=1pt,shorten <=1pt]
			
            \node[proofNode] at (0,0) (root) [] {$\dnnvquery{A,\u, \l,C}$};
   
            
            \node[proofLeaf] (active)[below=0.5cm of root, xshift=1.4*\xSep] {$(aux_2=0)\wedge (\l_{z(v_2)}=0)$};
   
			\node[proofLeaf](inactive)[below=0.5cm of root, xshift=-1.4*\xSep] {($f(v_2)=0)\wedge (\u_{z(v_2)}= 0)$};

			\node (leftProof)[below=0.1cm of active] {$\begin{bmatrix} 0&0&1&-2&0\end{bmatrix}^\intercal$};
   
			\node(rightProof)[below=0.1cm of inactive] {$\begin{bmatrix} 0&0&1&0&0\end{bmatrix}^\intercal$};

             \draw[proofEdge] (root) --node[label={[xshift=-0.4cm,yshift=-0.1cm]}] {} (active);

			\draw[proofEdge] (root) --node[label={[xshift=0.4cm,yshift=-0.1cm]}] {} (inactive);
		\end{tikzpicture}
	}

Each leaf contains a contradiction vector for their respective subquery (cf. Theorem~\ref{thm:mainthm}). 
\end{example}

\begin{figure}[t]
\vspace{-0.5cm}
    \centering
\begin{lstlisting}[style=imandra,morekeywords={Leaf,Node,Split,Relu,SingleVar},moreemph={type,of,int,list}]
type proofTree = 
  | Leaf of real list
  | Node of Split * proofTree * proofTree

type Split = Relu of int * int * int
  | SingleVar of int * real
\end{lstlisting}

        \vspace{-0.4cm}
	\caption{Marabou proof tree as an inductive data type in Imandra.}
    \label{fig:ipt}
    \vspace{-0.4cm}

\end{figure}

A proof tree \imlcode{t}, represented by an inductive type \imlcode{proofTree}, can be either a  leaf \imlcode{Leaf w} or a non-leaf node \imlcode{Node s tl tr}.
A leaf \imlcode{Leaf w}  corresponds to an \unsat{} result of the linear part of a subquery; 
leaves hence always contain a contradiction vector \imlcode{w : real list}.
A non-leaf node  \imlcode{Node s tl tr} corresponds to a split \imlcode{s: Split} and corresponding sub-proof-trees \imlcode{tl} and \imlcode{tr}. 
We consider two kinds of splits: 
\imlcode{s: Split} can be either a case split over a \relu constraint,
\imlcode{Relu b f aux}, where \imlcode{b}, \imlcode{f}, and \imlcode{aux} are the participating variables in the corresponding \relu constraint (i.e. the DNN verification query includes $\x_f = \relu(\x_b) \wedge  \x_f - \x_b - \x_{aux} = 0$), 
or a single-variable split \imlcode{SingleVar xi k}, performed on a variable $\x_i$ with a split on value $k$.

This type guarantees some structural properties but not the full certification of the proof trees. For example, for \imlcode{Leaf w}, \imlcode{w} has to be checked to be a contradiction vector for the corresponding subquery to construct a witness of \unsat (Theorem~\ref{thm:generalizedFarkas}).
During the parsing of the Marabou proof, we hence check that each \relu split corresponds to one of the constraints in the original DNN verification query 
via the recursive \checknode{} algorithm (Algorithm~\ref{alg:proofchecker}).

Marabou uses splits to compute tighter upper and lower bound vectors, 
later used to prove \unsat in the proof tree's leaves. 
The function \imlcode{update_bounds u l s} performs this bound tightening and yields updated lower and upper bounds for both subtrees:

\begin{lstlisting}[style=imandra, numbers=left,numberstyle=\tiny,morekeywords={SingleSplit,ReluSplit,Split},moreemph={let,match,with,in,list}]
let update_bounds (lbs: real list) (ubs: real list) (split: Split): 
        ((real list * real list) * (real list * real list)) =
    match split with
    | SingleSplit (i,k) -> ((lbs,set_nth ubs i k), (set_nth lbs i k,ubs))
    | ReluSplit (b, f, aux) ->
        let lbs_l = set_nth lbs f 0. in
        let ubs_l = set_nth (set_nth ubs b 0.) f 0. in
        let lbs_r = set_nth (set_nth lbs b 0.) aux 0. in
        let ubs_r = set_nth ubs aux 0. in
        (lbs_l, ubs_l), (lbs_r, ubs_r)
\end{lstlisting}

   In the case of a split on a single variable $i$ with value $k$, the updated bounds are $\l,\u[\u_i/k]$ and $\l[\l_i/k],\u$. 
    In the case of a split on a \relu constraint of the form $f = \relu (b) \wedge aux = f - b$, the updated bounds are $\l[\l_b/0,\l_{aux}/0]$ and $\u[\u_{aux}/0]$ for one child and $\l[\l_f=0]$ and $\u[\u_b/0, \u_f/0]$ for the other. This corresponds to the case analysis of lemma \imlcode{relu_split} (see Table~\ref{tab:lemmas}). We call the first case \emph{inactive phase} and the second case \emph{active phase} of the \relu constraint.
 Note that the order of the phases is fixed during the tree construction; the fixed order is followed in the bound updating implementation (see lines 4, 12).

The function \checknode$(A, \u, \l, C, \texttt{t} )$, 
traverses a proof tree \imlcode{t} and given a DNN verification  query $\dnnvquery{A, \u, \l, C}$ ensures that for all the leaves it is able to build witnesses of \unsat{} in the sense of Theorem~\ref{ithm:dnn_poly_farkas}.
It is given in pseudocode (Algorithm~\ref{alg:proofchecker}) and Imandra code (Figure~\ref{lst:checknode}).
For a leaf node containing a list of reals $w$, the procedure \imlcode{mk_certificate} computes witness candidates $\mathbb{I(\x), C(\x)}$ as in Theorem~\ref{ithm:dnn_poly_farkas}, then the algorithm checks whether $\mathbb{I}(\x) + \mathbb{C}(\x) < 0$ (\imlcode{mk_certificate} in Figure~\ref{lst:checknode}, l. 6). If this check passes, by  Theorem~\ref{thm:generalizedFarkas}, $\mathbb{I(\x), C(\x)}$ are witnesses of \unsat{} for the system of l.p.e. $\mathcal{S}^{A, \u, \l}(\x)$. By Theorem~\ref{ithm:reductiontogfl},  $\dnnvquery{A, \u, \l}$ is \unsat as well.
For a non-leaf node with data $\splt, \node^L, \node^R$:
        \begin{itemize}
        \item the procedure $\updatebounds$ computes updated versions of $\u, \l$ corresponding to the two phases of the split as described in the previous section (\updatebounds, lines~10-11).
        \item recursive calls to \checknode, using the new bounds that correspond to each sub-tree, check the sub-trees rooted in $\node^L$ and $\node^R$ (lines~13-14).
\end{itemize}
A tree $\node$ is \emph{certified}  if all checks pass. The function $\checknode(A, \u, \l, C, \node)$ implements Algorithm~\ref{alg:proofchecker} almost verbatim, with the notable difference in the use of the \imlcode{check_split} function (Figure~\ref{lst:checknode}, line~9). 
It re-iterates the check that \relu splits indeed correspond to a known \relu constraint of the verification query. 

\begin{algorithm}[t]
\small{
\caption{\small{Proof tree checking algorithm (\checknode)}}
\label{alg:proofchecker}
\begin{algorithmic}[1]
    \State Inputs: $A, \u, \l, C, \node$
    \State Outputs: $\{\emph{true, false}\}$
    \If{$\node$ is a leaf, with list of reals $w$}
        \State $\mathbb{I,C} := \mkcert{A, \u, \l, w}$
        \If{$\mathbb{I(\x) + C(\x)} < 0$}
        \State \Return \emph{true}
        \Else
        \State \Return \emph{false}
        \EndIf
    \EndIf
    \If{$\node$ is a non-leaf node, containing $\splt, \node^L, \node^R$}
        \State $\langle \lnode{L}, \unode{L}, \lnode{R},\unode{R}\rangle := \updatebounds(\u, \l, \splt)$
        \If{$\checknode(A, \lnode{L}, \unode{L}, C, \node^L) \newline 
        \hspace*{2em}\wedge \checknode(A, \lnode{R}, \unode{R}, C, \node^R)$}
        \State \Return \emph{true}
        \Else
        \State \Return \emph{false}
        \EndIf
    \EndIf
\end{algorithmic}}
\end{algorithm}

\begin{figure}
\begin{lstlisting}[style=imandra,xleftmargin=\parindent,numbers=left,numberstyle=\tiny,language=caml,morekeywords={Proof_tree,Leaf,Constraint,Node},emph={let,rec,with,in,list,match,bool}]
let rec check_node (tableau: expr list) (upper_bounds: real list) 
  (lower_bounds: real list) (constraints: Constraint.t list) 
  (proof_node: Proof_tree.t): bool =
    match proof_node with
    | Proof_tree.Leaf (contradiction, bound_lemmas) ->
        check_contradiction contradiction tableau upper_bounds 
        lower_bounds
    | Proof_tree.Node (split, bound_lemmas, left, right) ->
        let valid_split = check_split split constraints in
        let (lb_left, ub_left), (lb_right, ub_right) = update_bounds 
            lower_bounds upper_bounds split in
        let valid_children = 
            (check_node tableau ub_left lb_left constraints left) &&
            (check_node tableau ub_right lb_right constraints right) in
        valid_split && valid_children
\end{lstlisting}
\caption{Imandra implementation of \checknode}\label{lst:checknode}
\vspace{-0.4cm}
\end{figure}

The following example shows the execution of this algorithm.
\vspace{-0.3cm}
\begin{example}[Checking the Proof Tree of Example~\ref{ex:pt}. 
]
\label{ex:proof_checking}

To certify the proof tree, 
 the algorithm begins by checking the root node by certifying that the bound vectors of the two children correspond to the splits $\u_{f(v_2)} = 0 \wedge \l_{f(v_2)} = 0 \wedge \u_{z(v_2)} = 0  $ and $\u_{aux_2} = 0 \wedge \l_{aux_2} = 0 \wedge \l_{z(v_2)} =0$, i.e., to the two splits of the constraint $x_{f(v_2)}=\relu(x_{z(v_2)})$.
 Then, it recursively checks the leaves.
It starts by updating $\u_{z(v_2)}=0$, $\l_{f(v_2)} = 0$, and $\u_{f(v_2)} = 0$ and certifying the contradiction vector of the leaf: 
it constructs the system of l.p.e.~with equations representing the tableau $A$: $2x_1 + x_2 - x_{z(v_1)} = 0,\:\:\: x_2 - x_1 - x_{z(v_2)} = 0 ,\:\: \: 2x_{f(v_2)} - x_{f(v_1)} - y = 0, \:\:\: x_{f(v_1)} - x_{z(v_1)}- x_{aux_1} = 0,  \:\:\: x_{f(v_2)} - x_{z(v_2)}- x_{aux_2} = 0 $,
and the inequalities representing the bounds $\l,\u$. For simplicity, we consider only the inequalities $0 - x_{f(v_2)}  \geq 0, x_{f(v_1)} - 0 \geq 0$ and $y - 4 \geq 0$.
Then, it constructs the certificate $\mathbb{I} = \begin{bmatrix}
    0&0&1&0&0
\end{bmatrix}^\intercal\cdot A \cdot x =   2x_{f(v_2)} - x_{f(v_1)} - y = 0$, $\mathbb{C} = 2( - x_{f(v_1)} ) +  x_{f(v_1)} + y - 4$. Lastly, it checks that indeed $\mathbb{I} + \mathbb{C} = 2x_{f(v_2)} - x_{f(v_1)} - y - 2 x_{f(v_1)} +  x_{f(v_1)} + y - 4 = -4 < 0$.

The right leaf is checked similarly.
Since all nodes pass the checks, the algorithm returns \emph{true}, and the certification of this proof tree is complete.
\end{example}

\vspace{-0.2cm}
\section{Proof of Soundness}
\label{sec:checkerSoundness}
This section presents the  proof of soundness of the algorithm introduced in the previous section:
    given a DNN verification query $\dnnvquery{A, \u, \l, C}$ and the corresponding tree witness of \unsat $\node$, if \checknode$(A, \u, \l, C, \node)$ returns \emph{true}, then there should exist no satisfying assignment for $\dnnvquery{A, \u, \l, C}$.

The proof follows a custom induction scheme on the structure of \checknode. 
We will first prove the base case, when the proof tree is a leaf (Lemma~\ref{ilma:leaf_checking}), then the induction step, when it is a node with children (Lemma~\ref{lem:splits}). The proof for leaves is straightforward thanks to the Farkas lemma. The case for non-leaf nodes is trickier, as it involves proving that splits fully cover the DNN verification query solution space. This can be done by proving that both single variable splits and \relu splits are covering (Lemmas~\ref{lem:single_var_split} and \ref{lem:splits} respectively).

We first 
prove that \checknode{} is correct for proof trees leaves:

\begin{lemma}[Leaf checking]
\label{ilma:leaf_checking}
 \textbf{If}   $\node$ is a leaf  \textbf{and} $\checknode(A, \u, \l, C, \node)$ returns $true$
    \textbf{then}
   $ \neg(\exists \sv. \satpredsys{\matrixtopoly{A, \u, \l}(\x), \sv})$.
\end{lemma}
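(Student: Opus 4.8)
The plan is to treat this base case as an essentially immediate consequence of the Farkas machinery already established in Section~\ref{sec:check}, reducing the \emph{operational} fact that \checknode{} returns \emph{true} on a leaf to the \emph{algebraic} premise of the DNN Polynomial Farkas Lemma (Theorem~\ref{ithm:dnn_poly_farkas}). Since the lemma's conclusion concerns only the system of l.p.e.\ $\matrixtopoly{A, \u, \l}(\x)$ and not yet the DNN verification query itself, Theorem~\ref{ithm:reductiontogfl} is not needed here; a direct appeal to Theorem~\ref{ithm:dnn_poly_farkas} suffices.

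First I would case-split on the shape of $\node$. As the hypothesis assumes $\node$ is a leaf, we have $\node = \imlcode{Leaf w}$ for some $w : \texttt{real list}$, so the recursive definition of \checknode{} (Figure~\ref{lst:checknode} and Algorithm~\ref{alg:proofchecker}) collapses to its leaf branch. Unfolding this branch shows that $\checknode(A, \u, \l, C, \node)$ returns exactly the Boolean obtained by computing the witness candidates $\mathbb{I}, \mathbb{C} := \mkcert{A, \u, \l, w}$ and then testing $\mathbb{I}(\x) + \mathbb{C}(\x) < 0$. Hence the assumption that \checknode{} returns \emph{true} is definitionally equivalent to the statement that the witnesses produced by $\mkcert{A, \u, \l, w}$ satisfy $\mathbb{I}(\x) + \mathbb{C}(\x) < 0$.

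Next I would discharge the structural side-conditions on these witnesses. By the definition of $\mkcert$, the polynomial $\mathbb{I}$ is the linear combination $w^\intercal A$ of the tableau-row polynomials $p^{\alpha(i)}$ of $\matrixtopoly{A,\u,\l}$, and $\mathbb{C}$ is a \emph{non-negative} linear combination of the bound polynomials $q^{\beta(\u,j)}$ and $q^{\gamma(\l,k)}$ of Definition~\ref{def:matrix_to_poly}. With these facts established, the computed pair $(\mathbb{I}, \mathbb{C})$ has precisely the admissible form required to instantiate the existential premise $\exists \mathbb{I}, \mathbb{C}.\ \mathbb{I}(\x) + \mathbb{C}(\x) < 0$ of Theorem~\ref{ithm:dnn_poly_farkas} for this very $\dnnvquery{A,\u,\l}$. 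Applying that theorem then yields $\neg(\exists \sv.\ \satpredsys{\matrixtopoly{A, \u, \l}(\x), \sv})$, which is the goal.

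The main obstacle I expect is not the high-level argument but the verification that $\mkcert$ (realized by \imlcode{mk_certificate} / \imlcode{check_contradiction} in Imandra) genuinely emits witnesses of the Farkas-admissible form -- in particular that the coefficients attaching the bound polynomials to $\mathbb{C}$ are non-negative, since this is exactly what makes the monotonicity step inside Theorem~\ref{thm:generalizedFarkas} (lemma \imlcode{solution_is_not_neg}) sound. Establishing this requires unfolding $\mkcert$ and proving the bookkeeping lemmas that connect its index-based construction of $\mathbb{C}$ from $w$, $\u$, and $\l$ to the recursion-based evaluation underlying \satpredsys{} (analogous to the index-versus-recursion reconciliation needed for \imlcode{bound_reduction}). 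Once those lemmas are available as rewrite/forward-chaining rules, Imandra's simplifier should close the goal by rewriting the leaf branch to the premise of Theorem~\ref{ithm:dnn_poly_farkas} and discharging it directly.
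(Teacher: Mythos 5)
Your proof is correct and follows essentially the same route as the paper, whose own proof simply cites the definition of \checknode{} together with Theorems~\ref{thm:generalizedFarkas} and~\ref{ithm:reductiontogfl}. Your observation that Theorem~\ref{ithm:reductiontogfl} is dispensable is right for the lemma as literally stated (its conclusion concerns only the system $\matrixtopoly{A,\u,\l}(\x)$), though the paper invokes it because the leaf case is ultimately used, in Theorem~\ref{ilma:checker_soundness}, to conclude unsatisfiability of the query $\dnnvquery{A,\u,\l}$ itself.
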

\begin{proof}
 By the definition of \checknode{} and Theorems~\ref{thm:generalizedFarkas} and \ref{ithm:reductiontogfl}.
\end{proof}

\noindent For the inductive case, we first prove that tightening bounds according to splits is covering. We state the definition of boundedness as a conjunction instead of a universally quantified index to avoid instantiating an index and to allow better automation in Imandra.
\begin{definition}[Bound vectors]\label{def:bound_vectors}
Let $\u, \l, \x \in \R^n$. We say that $\x$ is bounded by $\l$ and $\u$ (and we write $\l \leq \x \leq \u$) if $\bigwedge\limits_{i = 0}^{n-1} \l_i \leq \x_i \leq \u_i$. 
\end{definition}

\noindent For single variable splits,
the following lemma is proven (we omit the proof as it is simple): 
\begin{lemma}[Single variable splits are covering]\label{lem:single_var_split}
Let $\u, \l, \x \in \R^n$; $i \in [0, n-1]$; $k \in \R$; $\l^L := \l$, $\u^L := \u[\u_i/k]$, $\l^R := \l[\l_i/k]$, $\u^R := \u$.

 \textbf{If} $\l \leq \x \leq \u$ 
 \textbf{then} $\l^L \leq \x \leq \u^L \vee
 \l^R \leq \x \leq \u^R$.
\end{lemma}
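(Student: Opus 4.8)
The plan is to reduce the whole statement to a single case split on the value of the $i^{\text{th}}$ coordinate $x_i$ relative to the split value $k$. Since $\R$ is totally ordered, we always have $x_i \leq k$ or $x_i \geq k$ (the two cases overlap harmlessly at $x_i = k$, which is fine for a disjunctive conclusion). The first disjunct will witness membership in the left subquery bounds, and the second the right. So the proof is essentially: assume $\l \leq \x \leq \u$, perform the dichotomy on $x_i$, and in each branch verify the corresponding conjunction from Definition~\ref{def:bound_vectors} coordinate by coordinate.

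First I would treat the case $x_i \leq k$ and establish $\l^L \leq \x \leq \u^L$. Here $\l^L = \l$, so the lower-bound half of the conjunction is immediate from the hypothesis $\l \leq \x$. For the upper bound $\u^L = \u[\u_i/k]$, I split each coordinate index $j$ on whether $j = i$ or $j \neq i$. For $j \neq i$, the lemma \imlcode{ith_set_nth} gives $\u^L_j = \u_j$, so $x_j \leq \u^L_j$ follows from $\x \leq \u$; for $j = i$, the lemma \imlcode{get_set_nth} gives $\u^L_i = k$, and $x_i \leq k$ is exactly the case assumption. The case $x_i \geq k$ is fully symmetric: now the upper bounds are unchanged ($\u^R = \u$) and the lower bound at index $i$ is tightened to $k$ via $\l^R = \l[\l_i/k]$, so the same two lemmas (\imlcode{get_set_nth} at $j=i$, \imlcode{ith_set_nth} at $j\neq i$) discharge the conjunction $\l^R \leq \x$, while $\x \leq \u^R$ is inherited from the hypothesis.

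The only genuine friction — and the reason the paper can call this proof simple yet still worth stating — is reconciling the two representations involved: boundedness is phrased as one large conjunction over all indices (Definition~\ref{def:bound_vectors}, chosen deliberately to avoid quantifying over an explicit index and to keep Imandra's automation unencumbered), whereas the updated bounds are produced by the pointwise \imlcode{set_nth} operation inside \updatebounds{}. The bridge is precisely the pair of lemmas \imlcode{get_set_nth} and \imlcode{ith_set_nth} in Table~\ref{tab:lemmas}, which characterise how $\l[\l_i/k]$ and $\u[\u_i/k]$ behave at the modified index versus all others. Once those are available as rewrite or forward-chaining rules, each coordinate of the conjunction reduces to either an unchanged bound (handled by the hypothesis) or the split value $k$ (handled by the case assumption), and the automation closes both branches. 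I would therefore expect no inductive argument or nontrivial arithmetic to be needed here; the entire obstacle is bookkeeping the index equality $j = i$ cleanly, which is why this lemma is an easy but necessary stepping stone toward the \relu-split covering result in Lemma~\ref{lem:splits}.
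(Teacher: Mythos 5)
Your proof is correct and matches what the paper intends: the paper omits the proof of this lemma as ``simple,'' but your dichotomy on $x_i \leq k$ versus $x_i \geq k$ followed by a coordinate-wise case split on $j = i$ versus $j \neq i$, discharged via \texttt{get\_set\_nth} and \texttt{ith\_set\_nth}, is exactly the argument pattern the paper uses for the analogous (and harder) \relu{} case in Lemma~\ref{lem:splits}. Nothing is missing.
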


 \noindent For \relu splits, recall that for each split with indices $b, f, aux$, the DNN verification query includes $\x_f = \relu(\x_b) \wedge  \x_f - \x_b - \x_{aux} = 0$.

\begin{lemma}[\relu splits are covering]\label{lem:splits}
Let $\u, \l, \x \in \R^n$; $f, b, aux \in [0, n-1]$ such that $f \neq b, f \neq aux, b \neq aux$; $\l^L := \l[\l_f/0]$, $\u^L := \u[\u_b/0, \u_f/0]$, $\l^R := \l[\l_{aux}/0]$, $\u^R := \u[\u_b/0, \u_{aux}/0]$.

 \textbf{If} $\l \leq \x \leq \u \wedge \x_f = \relu(\x_b)
 \wedge \x_f - \x_b - \x_{aux}  = 0$
 \textbf{then}
 $\l^L \leq \x \leq \u^L \vee \l^R \leq \x \leq \u^R$.
\end{lemma}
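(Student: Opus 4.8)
The plan is to reduce the two vector inequalities $\l^L \le \x \le \u^L$ and $\l^R \le \x \le \u^R$ to their coordinatewise form (Definition~\ref{def:bound_vectors}) and then argue index by index, exactly as for the single-variable case (Lemma~\ref{lem:single_var_split}). For every index $i \notin \{b,f,aux\}$ none of the four modified bound vectors differs from $\l$ or $\u$, so by \imlcode{ith_set_nth} the corresponding conjuncts collapse to $\l_i \le \x_i \le \u_i$, which is immediate from the hypothesis $\l \le \x \le \u$. The only indices needing attention are thus $b$, $f$, $aux$, and the distinctness assumptions $f \ne b$, $f \ne aux$, $b \ne aux$ guarantee the nested substitutions interact cleanly: each targeted entry is read off by \imlcode{get_set_nth} and each untouched entry by \imlcode{ith_set_nth}. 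To choose which disjunct to prove I would case-split on the ReLU phase using lemma \imlcode{relu_split}, giving either the active phase $\x_b \ge 0 \wedge \x_{aux} = 0$ or the inactive phase $\x_b < 0 \wedge \x_f = 0$.

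The inactive branch is routine and should discharge the left disjunct $\l^L \le \x \le \u^L$. At index $f$ we have $\l^L_f = \u^L_f = 0$ (by \imlcode{get_set_nth}), and the inactive hypothesis supplies precisely $\x_f = 0$; at index $b$ the upper bound is reset to $\u^L_b = 0$ while the lower bound is untouched, so $\l_b \le \x_b \le 0$ follows from $\x_b < 0$ together with $\l \le \x \le \u$; index $aux$ and all remaining indices are handled as above. Hence $\l^L \le \x \le \u^L$, establishing the left disjunct.

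The active branch is where I expect the main obstacle, and it concerns precisely index $b$ in the right disjunct. The index-$aux$ conjunct of $\l^R \le \x \le \u^R$ is fine, since $\l^R_{aux} = \u^R_{aux} = 0 = \x_{aux}$. However, the statement as worded sets $\u^R = \u[\u_b/0,\u_{aux}/0]$, so the index-$b$ conjunct demands $\x_b \le \u^R_b = 0$, whereas the active hypothesis only supplies $\x_b \ge 0$; these are compatible only when $\x_b = 0$, so for any strictly active point with $\x_b > 0$ the right disjunct fails (and the left disjunct fails too, since then $\x_f = \x_b > 0 \ne 0$). Thus the covering property cannot be closed with the \emph{upper}-bound reset $\u_b/0$ in the right branch. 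The resolution I would adopt is to take the right-branch bounds exactly as computed by \imlcode{update_bounds} in the \imlcode{ReluSplit} case, namely $\l^R = \l[\l_b/0,\l_{aux}/0]$ and $\u^R = \u[\u_{aux}/0]$: resetting the \emph{lower} bound of $b$ to $0$ makes $\l^R_b = 0 \le \x_b$ follow immediately from the active hypothesis, after which the right disjunct closes just as the left one did. With that correction the case analysis is complete, and the only genuine difficulty is localized to the mismatch between the written $\u^R$ and the bound tightening actually performed by the checker.
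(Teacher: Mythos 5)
Your proposal is correct and takes essentially the same route as the paper's proof: a case split on the ReLU phase via \imlcode{relu_split}, followed by an index-wise analysis isolating $b$, $f$, and $aux$ using \imlcode{get_set_nth} and \imlcode{ith_set_nth} (the paper treats the inactive phase as its Case~1 and dispatches the active phase as a ``similar'' Case~2). Your diagnosis of the right branch is also vindicated: the statement's $\l^R := \l[\l_{aux}/0]$, $\u^R := \u[\u_b/0,\u_{aux}/0]$ is a typo, since the paper's own Case~2 proves $\l[\l_b/0,\l_{aux}/0]_j \leq \x_j \leq \u[\u_{aux}/0]_j$ --- precisely the \imlcode{update_bounds} bounds you substituted, in agreement with the prose description of the split and the checker code in Section~\ref{sec:checktree}.
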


\begin{proof}
    Assuming that $\l \leq \x \leq \u \wedge \x_f = \relu(\x_b)
     \wedge \x_f - \x_b - \x_{aux} = 0$ holds, we consider the cases given by \imlcode{relu_split}, see Table~\ref{tab:lemmas} for statements of auxiliary lemmas mentioned below:
    \begin{enumerate}
        \item Case 1: $\x_b < 0 \wedge \x_f = 0$.
        We prove that for all indices $i \in [0, n-1]$, $\l[\l_f/0]_i \leq \x_i \leq \u[\u_b/0, \u_f/0]_i$, by considering cases according to the value of $i$.
        \begin{enumerate}
            \item if $i = b$:
            \begin{multline*}    
            \l[\l_f/0]_i 
            \overset{f \neq b, \mathtt{ith\mathunderscore set\mathunderscore nth}}= \l_i \overset{ass., Def.\ref{def:bound_vectors}}\leq \x_i 
            \overset{i=b}= \x_b 
            \overset{ass., \mathtt{leq\mathunderscore lt}}\leq 0 
            \overset{i=b, \mathtt{get\mathunderscore set\mathunderscore nth}}= \u[\u_b/0, \u_f/0]_i
            \end{multline*}
            
            \item if $i = f$: 
            \begin{multline*}    
            \l[\l_f/0]_i \overset{i=f, \mathtt{get\mathunderscore set\mathunderscore nth}}= 0 \overset{ass.}\leq \x_i
            \overset{i=f}= \x_f 
            \overset{ass.}\leq 0
            \overset{i=f, \mathtt{get\mathunderscore set\mathunderscore nth}}= \u[\u_b/0, \u_f/0]_i
            \end{multline*}

            \item if $i \neq b \wedge i \neq f$: 
            \begin{multline*}
            \l[\l_f/0]_i 
            \overset{i\neq f,\mathtt{ith\mathunderscore set\mathunderscore nth}}= \l_i 
            \overset{ass., Def.\ref{def:bound_vectors}}\leq \x_i 
            \overset{ass., Def.\ref{def:bound_vectors}}\leq \u_i 
            \overset{i\neq f \wedge i\neq b, \mathtt{ith\mathunderscore set\mathunderscore nth}}= \u[\u_b/0, \u_f/0]_i
            \end{multline*}
        \end{enumerate}
\item Case 2: $\x_b \geq 0 \wedge \x_{aux} = 0$.
        We prove that for all indices $j \in [0, n -1]$, $\l[\l_b/0, \l_{aux}/0]_j \leq \x_j \leq \u[\u_{aux}/0]_i$, we do a similar case analysis on the value of $j$. The proof proceeds similarly to Case 1. 
\end{enumerate}
\vspace{-0.5cm}
\end{proof}
\vspace{-0.1cm}  
Lemma~\ref{ilma:split_soundness} states that if the query corresponding to a parent node in the proof tree has a satisfying assignment, then this assignment will satisfy one of the child node's queries. 

\begin{lemma}[Splits are covering]\label{ilma:split_soundness}
Let $\u, \l, \x \in \R^n$; $Split$ a split, $\l^L, \u^L$, $\l^R,\u^R$ are the bounds computed by $\updatebounds(\u,\l,Split)$.

 \textbf{If} there exists  $\sv$  that satisfies $\dnnvquery{A,\u,\l,C}$
 \textbf{then}
 $\sv$ satisfies $\dnnvquery{A,\u^L,\l^L,C}$ \textbf{or} 
 $\sv$ satisfies $\dnnvquery{A,\u^R,\l^R,C}$.
\end{lemma}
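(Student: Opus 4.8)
The plan is to reduce this statement to the two covering lemmas already in hand, Lemmas~\ref{lem:single_var_split} and~\ref{lem:splits}, by exploiting the fact that a split alters \emph{only} the bound vectors while leaving the tableau $A$ and the constraint set $C$ untouched.

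First I would unfold the predicate $\satpred{\dnnvquery{A,\u,\l,C}, \sv}$ into its three conjuncts: $A \cdot \sv = \mathbf{0}$, $\l \leq \sv \leq \u$, and $\mathcal{C}(\sv)$. Since $\updatebounds$ returns the same $A$ and the same $C$ for both children, the conjuncts $A \cdot \sv = \mathbf{0}$ and $\mathcal{C}(\sv)$ transfer verbatim to each of $\dnnvquery{A,\u^L,\l^L,C}$ and $\dnnvquery{A,\u^R,\l^R,C}$. Consequently the whole proof obligation collapses onto the bound constraints: it suffices to establish $\l^L \leq \sv \leq \u^L \;\vee\; \l^R \leq \sv \leq \u^R$, and then recombine the chosen disjunct with the already-transferred equalities and constraints.

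Next I would case-split on the shape of the split, mirroring the definition of $\updatebounds$. If it is a single-variable split on index $i$ with value $k$, then $\l^L,\u^L,\l^R,\u^R$ are exactly the bounds of Lemma~\ref{lem:single_var_split}, and that lemma applied to $\l \leq \sv \leq \u$ yields the disjunction immediately. If it is a \relu{} split on indices $b,f,aux$, the updated bounds coincide with those of Lemma~\ref{lem:splits}, whose application additionally requires the scalar facts $\sv_f = \relu(\sv_b)$ and $\sv_f - \sv_b - \sv_{aux} = 0$. I would discharge the first from $\mathcal{C}(\sv)$, since the \relu{} constraint on the pair $(b,f)$ is one of the conjuncts of $\mathcal{C}$, and the second from $A \cdot \sv = \mathbf{0}$, since the auxiliary equation $x_f - x_b - x_{aux} = 0$ is precisely one of the rows of $A$. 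With both side conditions in place, Lemma~\ref{lem:splits} delivers the desired disjunction and the child query follows.

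The main obstacle is the \relu{} case: extracting the two scalar equalities out of the single packed hypothesis that $\sv$ satisfies the parent query. Concretely, this means locating the right \relu{} constraint inside the list $C$ and the matching auxiliary row inside the tableau $A$, which is justified only when the split is valid with respect to the query --- a condition guaranteed in context by the $\validsplit$/$\checknode$ machinery that precedes any recursive call. Bridging the index-based description of the split $(b,f,aux)$ with the recursion- and list-based encodings of $C$ and $A$ is the delicate bookkeeping step in Imandra; once the two equalities $\sv_f = \relu(\sv_b)$ and $\sv_f - \sv_b - \sv_{aux} = 0$ are exposed, the remainder is a direct appeal to the covering lemmas.
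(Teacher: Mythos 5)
Your proposal is correct and follows essentially the same route as the paper: the paper's proof likewise observes that only the bounds differ between parent and child queries and then invokes Lemmas~\ref{lem:single_var_split} and~\ref{lem:splits}. You additionally spell out how the side conditions $\sv_f = \relu(\sv_b)$ and $\sv_f - \sv_b - \sv_{aux} = 0$ are discharged from $\mathcal{C}(\sv)$ and the tableau rows, a detail the paper leaves implicit.
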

\begin{proof}
Since the bounds are the only parts of the children's queries that differ from their parent's query, proving Lemma~\ref{ilma:split_soundness} only requires to prove that the updated bounds are covering. This follows from Lemmas~\ref{lem:single_var_split} and \ref{lem:splits}.
  \vspace{-0.1cm}  
\end{proof}

Lemmas~\ref{ilma:leaf_checking} and \ref{ilma:split_soundness} now allow us to prove the overall soundness of \checknode.
\begin{theorem}
    [Algorithm~\ref{alg:proofchecker} is sound]
\label{ilma:checker_soundness}
\textbf{If}    $\checknode(A, \u, \l, C, \node) $ \textbf{returns true, then}\\
    $\neg(\exists \sv. \satpred{\dnnvquery{A, \l, \u, C}, \sv})
    $.
\end{theorem}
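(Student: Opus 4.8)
The plan is to prove the theorem by induction on the structure of the proof tree $\node$, mirroring the recursion of $\checknode$ in Algorithm~\ref{alg:proofchecker}. Throughout, I would carry well-formedness of the current query $\dnnvquery{A,\u,\l}$ w.r.t.\ $\x$ as an invariant, since both the leaf lemma and the underlying Farkas results implicitly require matching dimensions; the invariant is restored for the two recursive calls by \imlcode{well_formed_preservation} (Table~\ref{tab:lemmas}), using that $\updatebounds$ only overwrites individual bound entries via \imlcode{set_nth} and hence preserves vector lengths.

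For the \textbf{base case}, $\node$ is a leaf. If $\checknode(A,\u,\l,C,\node)$ returns \emph{true}, then Lemma~\ref{ilma:leaf_checking} gives $\neg(\exists\sv.\ \satpredsys{\matrixtopoly{A,\u,\l}(\x),\sv})$, i.e.\ the system of l.p.e.\ is \unsat. Theorem~\ref{ithm:reductiontogfl} then yields $\neg(\exists\sv.\ \satpred{\dnnvquery{A,\u,\l},\sv})$, the \unsat{} of the \emph{linear} query. Finally, any solution of $\dnnvquery{A,\u,\l,C}$ is in particular a solution of the linear query (it satisfies $A\cdot\x=\mathbf{0}\wedge\l\leq\x\leq\u$ and merely additionally $\mathcal{C}(\x)$), so \unsat{} of the linear query implies \unsat{} of $\dnnvquery{A,\u,\l,C}$, closing this case.

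For the \textbf{inductive step}, $\node$ is a non-leaf node containing $\splt,\node^L,\node^R$. Here $\checknode$ returns \emph{true} exactly when $\validsplit$ holds and both recursive calls return \emph{true}; write $\langle\lnode{L},\unode{L},\lnode{R},\unode{R}\rangle := \updatebounds(\u,\l,\splt)$. Applying the induction hypotheses to the two subtrees (whose bounds are well-formed and dimension-preserving) gives that both child queries $\dnnvquery{A,\unode{L},\lnode{L},C}$ and $\dnnvquery{A,\unode{R},\lnode{R},C}$ are \unsat. I then argue by contradiction: suppose some $\sv$ satisfies $\dnnvquery{A,\u,\l,C}$. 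By Lemma~\ref{ilma:split_soundness} (the covering lemma, instantiated by Lemmas~\ref{lem:single_var_split} and~\ref{lem:splits} for the two kinds of split), $\sv$ satisfies one of the two child queries, contradicting their \unsat. Hence no such $\sv$ exists and $\dnnvquery{A,\u,\l,C}$ is \unsat.

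I expect the main obstacle to be the interface between the syntactic $\validsplit$ check and the semantic hypotheses of the covering lemma. Lemma~\ref{lem:splits} only covers the solution space under the assumption $\x_f=\relu(\x_b)\wedge\x_f-\x_b-\x_{aux}=0$, and this assumption is legitimate only when the split genuinely corresponds to a \relu{} constraint present in $C$ — which is exactly what $\validsplit$ (the \imlcode{check_split} function) certifies. Discharging the \relu{} hypothesis from $\mathcal{C}(\x)$, rather than simply assuming it, is the step that ties $\validsplit$ to soundness, and making this line up with the custom induction scheme Imandra derives from $\checknode$ (so that the bound-update terms in the recursive calls match those fed to the covering lemma) is where most of the proof-engineering effort sits. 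The single-variable case carries no such side condition and is comparatively routine.
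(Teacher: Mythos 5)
Your proof is correct and follows essentially the same route as the paper: a functional/structural induction mirroring $\checknode$, with Lemma~\ref{ilma:leaf_checking} discharging the leaf case and the covering Lemma~\ref{ilma:split_soundness} (backed by Lemmas~\ref{lem:single_var_split} and~\ref{lem:splits}) discharging the inductive step, plus the well-formedness preservation invariant. Your extra explicit steps --- chaining through Theorem~\ref{ithm:reductiontogfl} in the base case and noting that $\validsplit$ is what licenses the \relu{} hypothesis of Lemma~\ref{lem:splits} --- are details the paper elides but do not change the argument.
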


\begin{proof}
    We proceed by a functional induction scheme based on \checknode's definition.
    
    \textbf{Base case}: $\node$ is a leaf. By Lemma~\ref{ilma:leaf_checking}.
    
    \textbf{Induction step}: Let $Split$ be a split, $\node^L$ and $\node^R$ be two proof trees and $\u^L, \l^L, \u^R, \l^R$ be the bounds obtained from $\updatebounds(\l,\u,Split)$.
    Our induction hypothesis states that $\checknode$ is sound for $\checknode(A, \lnode{L}, \unode{L}, C, \node^L)$ and $\checknode(A, \lnode{R}, \unode{R}, C, \node^R)$. We now need to prove that $\checknode$ is sound for $\checknode(A, \l, \u, C, \node)$, where $\node$ is the proof tree with children $\node^L$ and $\node^R$ and split $Split$.
    
    Assuming that $\checknode(A,\l,\u,C, \node)$ holds, by definition  $\checknode(A, \lnode{L}, \unode{L}, C, \node^L)$ and $\checknode(A, \lnode{R}, \unode{R}, C, \node^R)$ also hold.
    By the induction hypothesis, this means that $\neg(\exists \sv. \satpred{\dnnvquery{A, \l^L, \u^L, C}, \sv})$ and $\neg(\exists \sv. \satpred{ \dnnvquery{A, \l^R, \u^R, C}, \sv})$

    By Lemma~\ref{ilma:split_soundness}, we conclude that $\neg(\exists \sv. \satpred{\dnnvquery{A, \l, \u, C}, \sv})$.
\end{proof}

\noindent Note that the proof assumes that all the query elements are well-formed w.r.t the variable vector, i.e. the tableau, bounds and variable vector dimensions match, and the proof-tree splits correspond to the query constraints. In the implementation, we need to prove that these properties are preserved throughout the inductive steps, with lemmas such as \imlcode{well_formed_preservation} (see Table~\ref{tab:lemmas}). 
Even though we need to give such indications, the custom inductive scheme is derived automatically by Imandra: the user tactic interaction with Imandra is shown in Figure~\ref{fig:final-proof}.

\begin{figure}[t!]
\vspace{-0.2cm}
    \centering
\begin{lstlisting}[basicstyle=\scriptsize\ttfamily,xleftmargin=0pt,xrightmargin=0pt,linewidth=1.05\linewidth,morekeywords={theorem,@@by,induction,expand,use,@@disable,Constraint,Proof_tree,},moreemph={list}]
theorem check_node_soundness (tableau: real list list) (upper_bounds: real list)
		(lower_bounds: real list) (constraints: Constraint.t list) (tree: Proof_tree.t) 
		(x: real list) =
    valid_proof tableau upper_bounds lower_bounds constraints tree
    && well_formed_vector tableau x
    ==> unsat tableau upper_bounds lower_bounds constraints x
[@@by [%expand "valid_proof"] 
   @> [%expand "well_formed_vector"] 
   @> induction ()
   @>>| [%use check_node_soundness_full tableau upper_bounds lower_bounds constraints tree x]
   @> [%use check_node_parent_imply_check_node_children (mk_eq_constraints tableau) 
            upper_bounds lower_bounds constraints tree]
   @> [%use well_formed_preservation tableau upper_bounds lower_bounds (split_of_node tree)]
   @> auto]
[@@disable List.length, well_formed_tableau_bounds, check_node, mk_eq_constraints, unsat, set_nth, update_bounds_from_split]
     \end{lstlisting}
     \vspace{-0.6cm}
      \caption{Proof of Theorem~\ref{ilma:leaf_checking} in Imandra: \imlcode{valid_proof} calls \checknode, in addition to some structural checks (e.g. on the tableau dimensions).}
    \label{fig:final-proof}
    \vspace{-0.5cm}
\end{figure}

\section{Evaluation}
\label{sec:eval}

In this section, we evaluate the new implementation of  across two orthogonal axes: first, the code complexity of its main modules; and second, the performance speed, compared to the Marabou C++ implementation. 

\begin{table}[ht]
\centering
\vspace{-0.4cm}
\scalebox{0.9}
{
\begin{tabular}{> {\raggedright\arraybackslash}p{3.0cm} >{\raggedright\arraybackslash}p{3.7cm} >
{\centering\arraybackslash}p{0.7cm} >
{\centering\arraybackslash}p{0.5cm}>
{\raggedright\arraybackslash}p{4.2cm}}
\toprule
\textbf{Result} &\textbf{Module name} &  \textbf{L.O.C} & \textbf{Aux. Lem.} &\textbf{Library Dependencies (accumulating) }  \\
\midrule
Poly Farkas lemma (\textbf{Theorem~\ref{thm:generalizedFarkas}}) & farkas.iml & 194 & 21 & Imandra Standard Libraries: Real, List, Polynomials  \\
\midrule
Sound application of & well\mathunderscore formed\mathunderscore reduction.iml& 41  & 13 & farkas.iml, certificate.iml,  \\
 DNN polynomial & bound\mathunderscore reduction.iml & 359 & &arithmetic.iml, util.iml,  \\
 Farkas lemma (\textbf{Theorem~\ref{ithm:reductiontogfl}})& 
 tableau\mathunderscore reduction.iml & 356 &  &   tightening.iml, constraint.iml, proof\_tree.iml, checker.iml, bound\_reduct\_g.iml, mk\_bound\_poly.iml \\
\midrule
Soundness of leaf checking (\textbf{Lemma~\ref{ilma:leaf_checking}}) & leaf\mathunderscore soundness.iml & 145 & 40 & sat.iml, split.iml 
bound\_reduction.iml, well\_formed\_reduction.iml, tableau\_reduction.iml
\\
\midrule
Single variable splits are covering (\textbf{Lemma~\ref{lem:single_var_split}) }&single\mathunderscore var\mathunderscore  split\mathunderscore soundness.iml & 81 & 10  &  \\
\midrule
 \relu splits are &relu\mathunderscore  split\mathunderscore soundness.iml & 113  & 18 & relu.iml  \\
covering &relu\mathunderscore  case\mathunderscore 1\mathunderscore bounded.iml & 337  & &   \\
(\textbf{Lemma~\ref{ilma:split_soundness}}) 
 &relu\mathunderscore  case\mathunderscore 2\mathunderscore bounded.iml & 338 & &  \\
\midrule
Soundness of node checking & node\mathunderscore soundness.iml & 78 & 19 & relu\_split\_soundness.iml, single\_var\_split\_soundness.iml  \\
\midrule
Soundness (\textbf{Theorem~\ref{ilma:checker_soundness}})&checker\mathunderscore soundness.iml & 71  & 146 &  leaf\_soundness.iml,
 node\_soundness.iml \\ 
\midrule
\midrule
\textbf{Total:} &  &\textbf{2113} &\textbf{267} & \\
\bottomrule
\end{tabular}
}
\caption{Summary of the entire formalisation. The Table reads as follows: a result $A$ is proven in module \texttt{A\_{mod}}, which is $N$ lines long, calls $M$ auxiliary lemmas, and depends on libraries as listed.}
\label{tab:lemmas2}
\vspace{-0.8cm}
\end{table}

\textbf{Code Complexity.} The overview of the entire formalisation is given in Table~\ref{tab:lemmas2}. In addition to counting L.O.C., we also count the number of auxiliary lemmas per proof, as they are the main mode of user interaction with Imandra proof search. 
We note that, due to the clever proof production offered by the Waterfall method in Imandra~\cite{boyerComputationalLogic1979,passmoreImandraAutomatedReasoning2020},
the size of the human-written code (as exemplified in Figure~\ref{fig:farkas-imandra}, \ref{fig:final-proof}) is much smaller than the actual length of the corresponding proof (as shown in Figure~\ref{fig:enter-label}).

\textbf{Performance Speed.}
As scalability of DNN verifiers is a major factor within the DNN verification community~\cite{BrBaJoWu24, BrMuBaJoLi23}, any implementation of algorithms should be considered with respect to its performance.
We used the proof producing version of Marabou to solve queries from two families of benchmarks: (1) \emph{collision avoidance} (coav), which verifies a DNN with 137 \relu neurons attempting to predict collisions of two vehicles that follow
curved paths at different speed~\cite{Eh17}; and (2) \emph{robotics navigation}~\cite{AmCoYeMaHaFaKa23} with properties of a neural robot controller with 32 \relu neurons. We chose these benchmarks because they provide a large dataset of \unsat{} DNN verification queries that are solvable in a short time. 
We have disabled some optimizations within Marabou --- proofs of bound tightenings that are derived based on the \relu{}  constraints, and thus are not proven directly by using the Farkas lemma~\cite{IsBaZhKa22}. 

\begin{figure}[b]
    \centering
    \vspace{-0.4cm}
    \begin{minipage}[b]{0.45\linewidth}
        \centering
        \includegraphics[width=\linewidth]{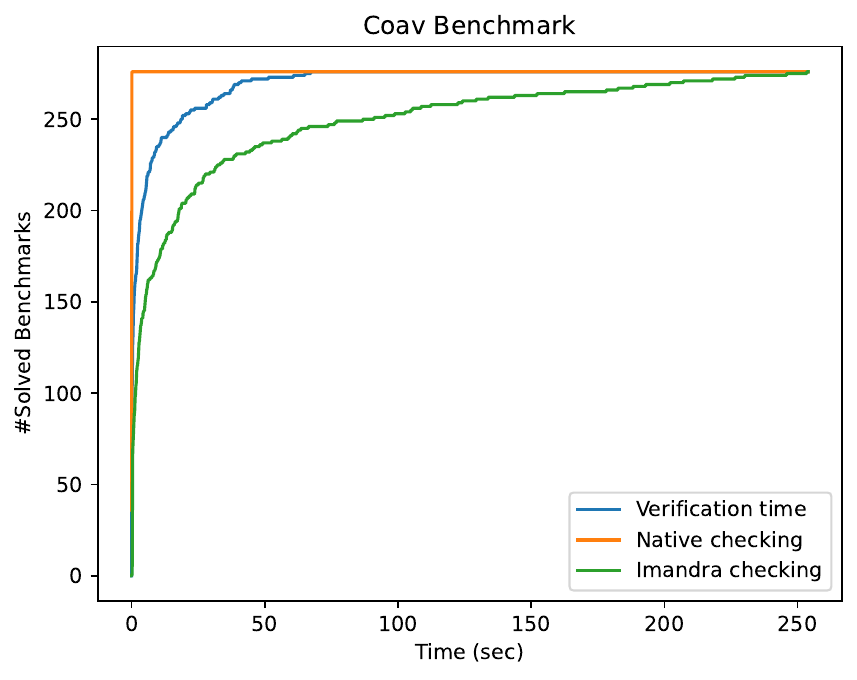}
    \end{minipage}
    \hfill
    \begin{minipage}[b]{0.45\linewidth}
        \centering
        \includegraphics[width=\linewidth]{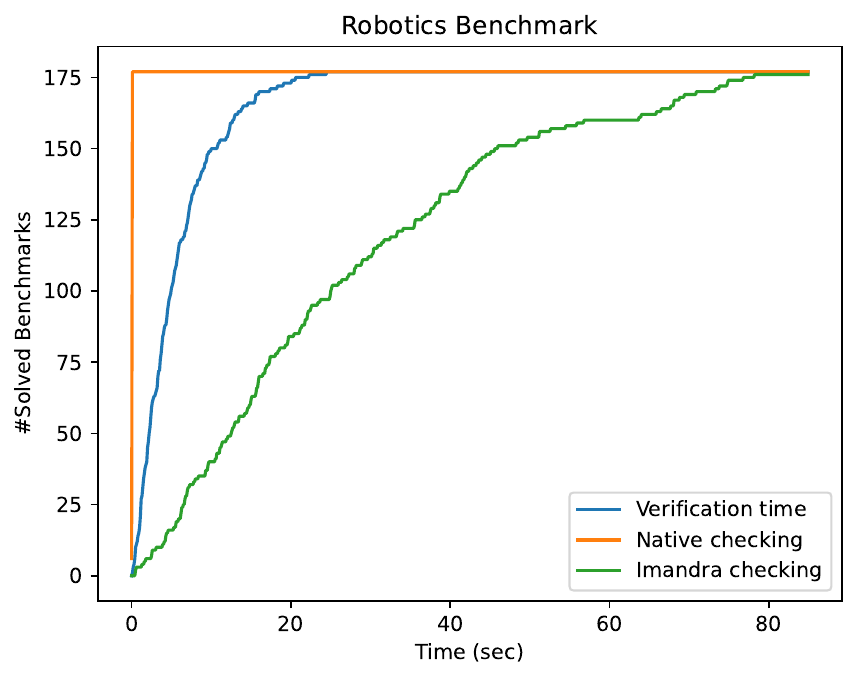}
    \end{minipage}
    \vspace{-0.4cm}
    \caption{Comparison of verification, native checking and Imandra checking time. Each point $(x,y)$ represents the number $x$ of verified/checked instances in $y$ seconds.}
    \label{fig:eval}
    \vspace{-0.4cm}
\end{figure}
Overall, Marabou has solved all queries and successfully generated \unsat{} proofs for 293 coav and 213 robotics queries. For some queries, proofs were not generated due to early \unsat{} deduction during preprocessing. We evaluated the results of 276 coav and 180 robotics queries, which have proof size smaller than 5MB.
All \unsat{} proofs were checked by the native Marabou checker and by Imandra, and both checkers have certified all queries. 
For each query, we measured the time it took Marabou to deduce \unsat{} (Verification time) and the time it took the native Marabou and Imandra to check the proofs (Native and Imandra checking time, respectively). Our evaluation is depicted in Figure~\ref{fig:eval}. 
For the coav and robotics benchmarks, on average, Marabou solved the queries in 5.40 and 5.62 seconds, respectively; while its native checker checked the proofs in average of 0.003, 0.03 seconds. Imandra required on average 24.64 and 26.72 seconds,  suggesting that, regardless of network size, Imandra requires time proportionate to the verification time of Marabou. This hypothesis needs to be further investigated. Furthermore, even though Imandra's checking time is considerably slower than Marabou's, its effect on the overall verification process is of factor $4.56\times-4.76\times$ slowdown.

Although performance differences are expected due to the use of more precise arithmetic, our results suggest a crucial trade-off between scalability and reliability. We believe that some performance difference is due to the use of verification-oriented data structures such as lists, and thus using optimized structures such as maps may improve speed significantly as shown in~\cite{desmartinNeuralNetworksImandra2022}, especially in the presence of sparse vectors in  DNN Verification queries.

\vspace{-0.2cm}
\section{Conclusions,  Future and Related  Work }
\label{sec:relatedwork}
The nascent field of DNN verifiers faces several challenges,
and certification of DNN verifiers themselves is widely recognized as one of them~\cite{PLNNV25,WuIsZeTaDaKoReAmJuBaHuLaWuZhKoKaBa24}. 
Recent work of~\cite{IsBaZhKa22} laid theoretical foundations for the certification of the DNN verifier Marabou, by connecting its certificate producing version with the well-known Farkas lemma. In this paper,  we take inspiration from the \emph{proof-carrying code}~\cite{Necula97,komendantskaya2025proofcarryingneurosymboliccode} and \emph{self-certifying code}~\cite{10.1145/3689624}  tradition and propose a framework  that implements, executes, and verifies a checker for certificates produced by Marabou, \emph{within the same programming language},
thus obtaining high assurance levels.
Moreover, thanks to Imandra's native use of infinite precision reals, we avoid the problem of floating point imprecision in the checker. 
Evaluating our checker suggests a trade-off between reliability and scalability. We hypothesize that the differences originate primarily due to the use of infinite-precision arithmetic and the choice of verification-oriented data structures. We leave a more detailed analysis for future work.

\subsection{Related Work} 
We hope to encourage further collaboration between verification and programming language communities.   
Notable is the recent success of industrial ITPs, such as Imandra~\cite{passmoreLessonsLearnedIndustrialization2021a} and   
F*~\cite{andriciSecuringVerifiedIO2024, merigouxCatalaProgrammingLanguage2021}.
Such languages are particularly suitable for AI verification, thanks to their automation and other modern features~\cite{desmartinCheckINNWideRange2022}.
 Our work opens the opportunity for future integration of DNN verifiers in these provers and ITPs more generally.  

\textbf{Imandra vs other Provers.} In principle, this work could be replicated in other proof assistants such as Isabelle/HOL, ACL2, LEAN, PVS or Rocq, and it could be interesting to do so.  Anecdotally, we find that Imandra's high level of proof automation, lemma discovery features, integrated bounded and unbounded verification, and efficient model execution make it an ideal environment for developing verified tools such as our verified proof checker.

\textbf{Proof-evidence production for SMT solvers}  is a known problem~\cite{Ne98, BaDeFo15,BaReKrLaNiNoOzPrViViZoTiBa22, DeBj11}, and we build on some experience in this domain. However, no SMT solver has been verified in an ITP as far as we are aware, the closest work in this direction comes from ITPs that integrate SMT solvers for proof automation and can verify the proof evidence~\cite{bohmeFastLCFStyleProof2010,EkMeTiKeKaReBa17}.

\textbf{Existing formalisations of the Farkas Lemma.} Although our proof of the Farkas lemma takes into consideration the previous experience in other ITPs~\cite{bessonModularSMTProofs2011,botteschFarkasLemmaMotzkin2019,passmoreACL2ProofsNonlinear2023}, 
it had to include substantial modifications  to cover the DNN case and in particular Imandra Theorem~\ref{ithm:reductiontogfl} is original relative to the cited papers.  Moreover, 
we prove Farkas lemma in a polynomial form (in Section~\ref{sec:check}). 
As opposed to the original matrix-based Farkas lemma~\cite{Va96}, the polynomial version yields easier automation in Imandra. 

\subsection{Future Work}
There are a number of directions in which we plan to extend this work.

\textbf{From Farkas vectors to specifications.} We plan to lift the soundness of the checker (Theorem~\ref{ilma:checker_soundness}) to the level of DNN verification queries, and, via a certified compilation procedure, to higher-order specification languages such as Vehicle~\cite{daggittVehicleHighLevelLanguage}. This would extend formal verification to encompass the left-hand-side blocks of Figure~\ref{fig:trust}.  Thus, this paper can be seen as a first step towards developing methods of \emph{proof-carrying neuro-symbolic code}~\cite{komendantskaya2025proofcarryingneurosymboliccode}.

\textbf{Optimisation of Proof Checking.} One of Marabou's key techniques for scaling to larger verification problems is the use of \emph{theory lemmas} for dynamic bound tightening. These lemmas characterize the connections between variables that participate in a \relu constraint, and are not proven directly by using the Farkas lemma. This feature is supported by the proof production in \cite{IsBaZhKa22}. Although it is possible to run DNN verification queries -- and certify their proofs -- without support of these theory lemmas, they are key to scaling to larger verification queries. We implemented them in Imandra, but certifying their soundness would be the next logical step to fully support proofs generated by Marabou.
While supporting DNNs with other piecewise-linear activations, such as \emph{maxpool} and \emph{sign}, is relatively striaghtforward, non-linear activations are more challenging: their verification relies on over-approximations, which Marabou does not currently generate proofs for. This is also grounded in theoretical results~\cite{IsZoBaKa23}. 

\textbf{Verification of cyber-physical systems with DNN components} is another possible application for presented results. For this,  DNN verifiers need to interface with languages that can express the system dynamics and/or probabilistic safety properties~\cite{daggittVehicleHighLevelLanguage}.
 The presented proof production will ensure that any such integration is sound.


\bibliographystyle{abbrv}
\bibliography{main}

\end{document}